\numberwithin{equation}{section}
\newtheorem{theorem}{Theorem}[section]
\newtheorem{assumption}[theorem]{Assumption}
\newtheorem{claim}[theorem]{Claim}
\newtheorem{corollary}[theorem]{Corollary}
\newtheorem{lemma}[theorem]{Lemma}
\newtheorem{proposition}[theorem]{Proposition}
\newenvironment{proof}[1][Proof]{\noindent\textbf{#1.} }{\ \rule{0.5em}{0.5em}}
\begin{document}
	
	\title{Long run consequence of p-hacking
	}
	
	\author{Xuanye Wang\thanks{Institute for Advanced Economics Research, Dongbei University of Finance and Economics. \href{xuanye.wang@dufe.edu.cn}{xuanye.wang@dufe.edu.cn}.}}
	\date{\today}
	\maketitle

\begin{abstract}
	We study the theoretical consequence of p-hacking on the accumulation of knowledge under the framework of mis-specified Bayesian learning. 
	A sequence of researchers, in turn, choose projects that generate noisy information in a field. In choosing projects, researchers need to carefully balance as projects generates big information are less likely to succeed. In doing the project, a researcher p-hacks at intensity $\varepsilon$ so that the success probability of a chosen project increases (unduly) by a constant $\varepsilon$. In interpreting previous results, researcher behaves as if there is no p-hacking because the intensity $\varepsilon$ is unknown and presumably small. 
	We show that over-incentivizing information provision leads to the failure of learning as long as $\varepsilon\neq 0$. If the incentives of information provision is properly provided, learning is correct almost surely as long as $\varepsilon$ is small.
\end{abstract}

\section{Introduction}
P-hacking has attracted the attention of the entire academia.
There are tons of papers documenting that p-hacking is widespread throughout science and discussing how to catch and alleviate such behavior. To name a few, see \cite{GM2018}, \cite{BCFL2023} and \cite{HHLKJ2015} for example.
However, there is almost no paper studying the theoretical consequence of p-hacking on the accumulation of knowledge
\footnote{The only exception we know is \cite{HHLKJ2015}. In this empirical paper, they conclude that the extent of p-hacking is weak relative to the real effect-size in meta-analysis and suggest that p-hacking may not significantly alter the scientific consensus drawn from meta-analysis.}
. A paper p-hacks doesn't mean the theory of that paper is wrong. In many fields, knowledge is obtained after carefully weighing all the positive and negative evidences provided by many papers. If both sides p-hack at roughly the same intensity, will the effect of p-hacking cancel out? In occasional cases, p-hacking might even be a good thing, as it helps to bring more attention to a stunted but correct theory. A famous example is that Gregor Mendel might p-hack in his peas experiments.

We propose to study the consequence of p-hacking under the framework of mis-specified Bayesian learning. (See \cite{Bohren2016}, \cite{BH2021} and \cite{FII2020}) This literature studies whether Bayesian players eventually correctly learn when their specification of the underlying data-generating distribution is wrong. It exactly deals the problem of learning from literature suffered from p-hacking. Papers could be viewed as evidences sampled from an underlying distribution, and p-hacking distorted the arrival probability of each sampled evidence.

We propose the following simple model, hoping that it could shed some light in this direction. In a research area, one of two theories ($A, B$) is true. A sequence of researchers, in turn, choose one project to work on. Each project is indexed by a positive number $l$, meaning that the success of this project would generate likelihood ratio $l$ ($B$ over $A$) upon success. If the chosen project succeeds, the researcher is rewarded by $P(I)$. Here $I$ is the information generated (measured as the KL-divergence of beliefs before and after the project) and $P$ is an increasing payoff function. The true success probability of project $l$ is determined by state-contingent bell-shaped functions so that projects generating big information are less likely to succeed. Therefore, in choosing a project, a researcher needs to carefully weigh between the information generated and the success probability. We assume all researchers just maximize the expected payoff. As the project going, maybe because the p-value is slightly above $5\%$, the researcher painfully decides to p-hack. We model p-hacking as it unduly increase the success probability by a small constant $\varepsilon$. 
Such behavior makes the presented evidence stronger than it actually is. However, as $\varepsilon$ is unknown and presumably small, people still read the evidence as if there is no p-hacking.

Our model turns into a standard Bayesian learning model if the p-hacking intensity $\varepsilon=0$. In this case, after observing enough evidences, researchers would assign all the weight to the true theory. It would be great if a small change in $\varepsilon$ doesn't affect this. We find that it depends on the growth rate of the payoff function: if the payoff function grows fast enough, then the event that all the weight eventually gets assigned to the true theory happens with probability $0$ for all $\varepsilon>0$; if the payoff function grows slowly, the good result under $\varepsilon=0$ is restored for sufficiently small $\varepsilon$.

Let us briefly explain the intuition behind these results. If enough weight has been assigned to one theory, keep pushing beliefs toward this theory generates almost no information. So researchers are naturally motivated to go contrarian to generate a big information. The only thing preventing them from doing so is that the success probability of an extreme contrarian project is super small. But this can be compensated if the payoff function grows fast enough. In other words, if the current belief assigns heavy weight to theory $A$, a fast-growing payoff function provides enough incentives for researcher to work on a project strongly favoring theory $B$. Mathematically, we could keep track the belief at period $t$ as the log likelihood ratio of $B$ over $A$ conditional on what happened up to period $t$. Under a fast-growing payoff function, the expected change of this belief process turns positive once enough weight is on $A$. This prevent the belief process from going to $-\infty$ (assigning all the weight to $A$) even if theory $A$ is true. On the other hand, if the payoff function grows slowly, researchers would prefer to work on projects with large success probabilities, when enough weight is assigned to one theory. If theory $A$ is true and $\varepsilon$ is small, the above-mentioned belief process can be shown to be a supermartingale. We can then prove this supermartingale does go to $-\infty$ a.s. using martingale central limit theorem.

The primary contribution of this paper is suggesting using the mis-specified Bayesian learning framework to study the theoretical consequence of p-hacking on the accumulation of knowledge. To the best of our knowledge, no previous research goes in this direction. The secondary contribution is that we show over-incentivizing information provision could harm the learning in the presence of p-hacking. This result is also one of the few results in mis-specified learning literature that a $\epsilon$-small mis-specification leads to the failure of learning. We also show that p-hacking at low intensity doesn't harm the learning under proper incentives.

This paper is organized as following: in section $2$ we describe the model; in section $3$ we solve the model; in section $4$ we discuss how the growth rate of payoff functions affects researchers' choices of projects; in section $5$ we discuss potential extensions of the model and conclude.

\section{Model}
\label{section1.1-2023-10-22}
There is a research area with unknown states (theories) $A$ and $B$ and prior belief $(u_0,1-u_0)$ (here $u_0=\Pr(A)$). Without loss of generality, assume the true state is $A$.
In each period, one researcher arrives with probability $p<1$.
The researcher needs to choose a project $l$. The project, if succeed, will generate a likelihood ratio $l$ (state B over state A).
The probability that project $l$ succeeds is determined by a commonly known state-dependent function ($p_A(l)$ under state $A$, $p_B(l)$ under state $B$).

Upon the success of project $l$, the researcher is rewarded according to the amount of information generated, measured by the KL-divergence of beliefs before and after his project.
To be specific, information generated by project $l$ under belief $u$ is computed as 
\begin{eqnarray}
	&&I(u,l)=KL(u_{t+1}(l,u_t)|u_t=u)\notag\\
	&=&
	\frac{u}{u+(1-u)l}\log\frac{1}{u+(1-u)l}+\frac{(1-u)l}{u+(1-u)l}\log\frac{l}{u+(1-u)l}\notag
\end{eqnarray}
The researcher's payment is $P(I(u,l))$ with $P(I)$ being a strictly increasing function.
If the chosen project fails, the researcher receives nothing despite that the failure also provides some information.

In period $t$, the researcher, who knows current $u$, chooses project $l$ to maximize his expected payoff
\begin{eqnarray}
	\label{eqn1.1}
	\max_{l\in (0,+\infty)} P(I(u,l))[up_A(l)+(1-u)p_B(l)].
\end{eqnarray}
Whether the chosen project succeed is publicly observed by all researchers arrive in later periods. Besides, we assume there is a commonly known tie-breaking rule. Thus, even if the chosen project fails, all the future researchers could still correctly infer researcher $t$'s choice.

Up to this point, p-hacking hasn't shown up. 
In choosing the project, the researcher acts genuinely, without considering the possibility that he may p-hack in the future. However, as the project goes, maybe his p-value is just slightly above $5\%$ and he makes a painful decision to p-hack a little bit.

We model p-hacking as it slightly increases the succeed probability of a chosen project by a constant $\varepsilon$. However, as this $\varepsilon$ is unknown and presumably small, people treat past results as if there wasn't any p-hacking.
In other words, p-hacking distorts the learning process in the following way: if project $l^*$ succeeds, which happens with probability $p(p_A(l^*)+\varepsilon)$, the likelihood ratio of state $B$ over $A$ is updated from $\frac{1-u}{u}$ to $\frac{1-u}{u}l^*$; if project $l^*$ fails, which happens with probability $1-p(p_A(l^*)+\varepsilon)$, the likelihood ratio is updated to $\frac{1-u}{u}\frac{1-pp_B(l^*)}{1-pp_A(l^*)}$.

We care about whether the beliefs, updated in the distorted way with p-hacking intensity $\varepsilon$, eventually assigns all the weight to the true state $A$. That is, whether the stochastic belief process $\frac{1-u_t^\varepsilon}{u_t^\varepsilon}$ converges to $0$. (Or equivalently, $u_t^\varepsilon$ converges to $0$.)

Some extra assumptions are made about functions of success probabilities $p_A(l),p_B(l)$ and function of payment $P(I)$. First, we assume that the success probability gradually dies out as the likelihood ratio $l$ gets extreme. That is, it is harder to get more convincing evidences, and impossible to get fully revealing evidences. Mathematically, the assumption is written as
\begin{assumption}
	\label{assumption1.1}
	\begin{enumerate}
		\item $\exists l_A$ near $1$ such that
		$$p_A'(l)>0 \mbox{ on } (0,l_A); p_A'(l)<0 \mbox{ on } (l_A,+\infty);$$
		similarly, exists $l_B$ near $1$ with similar property.
		\footnote{It would be desirable to assume that $p_A(l)$ and $p_B(l)$ both peak at $1$. That is, the project that provides no information succeeds with the highest probability. However, it is not doable in this model. 
		As $l$ is the likelihood ratio of state $B$ over $A$ conditional on observing the $l$, for consistency, $\frac{p_B(l)}{p_A(l)}=l$ must be true. One can verify that  $p_B'(1)=p_A'(1)=0$ cannot both hold under this restriction. So we assume that $l_A,l_B$ are close to $1$ but are not $1$. This assumption also makes some proofs easier.}
		\item $\lim_{l\to 0^+}p_A(l)=\lim_{l\to +\infty}p_A(l)=\lim_{l\to 0^+}p_B(l)=\lim_{l\to +\infty}p_B(l)=0$.
	\end{enumerate}
\end{assumption}

We shall see that the growth rate of payoff function $P(I)$ and the vanishing rate of functions of success $p_A(l),p_B(l)$ jointly determine whether learning is complete in the long run. Thus, we need a way to measure and compare the growth and vanishing rate. 
Hardy's L-function provides us the necessary tool. (see appendix \ref{appendix1} for details). 
\begin{assumption}
	We only consider those payoff functions $\mathbf{P}$ which behave like a L-function as $I\to +\infty$.
	
	That is, for each $P(I)\in \mathbf{P}$, there exists a L-function $L(I)$ such that $\lim_{I\to +\infty} P(I)/L(I)=1$. 
\end{assumption}
Among all the benefits brought by the $L$-functions, the most important one is that the set of payoff functions $\mathbf{P}$ is totally order by the growth rate: 
\begin{lemma}
	Given two functions $P_1,P_2\in \mathbf{P}$, we write $P_1\preceq P_2$ iff $\lim_{I\to +\infty}\frac{P_1(I)}{P_2(I)}\leq 1$
	\footnote{
	Here $P_1\prec P_2$ means that $\lim_{I\to +\infty}\frac{P_1(I)}{P_2(I)}< 1$. It may worth to mention that this order $\prec$ is not the same as the prevailing orders used to study $L$-functions. In \cite{Hardy1910}, Hardy used $P_1\prec P_2$ to mean that $\lim_{I\to +\infty}\frac{P_1(I)}{P_2(I)}=0$. Often, $P_1\prec P_2$ is also used to mean that $P_1(I)<P_2(I)$ holds for all sufficiently large $I$.
}
	. Then one of the following three relations must hold between any two functions in $\mathbf{P}$.
	\begin{eqnarray}
		P_1\prec P_2, P_1\sim P_2, P_1\succ P_2.
	\end{eqnarray}
\end{lemma}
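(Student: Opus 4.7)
The plan is to reduce the trichotomy for $\mathbf{P}$ to Hardy's classical theorem on $L$-functions, which asserts that for any two $L$-functions $L_1, L_2$ the ratio $L_1(I)/L_2(I)$ is eventually monotonic and so has a well-defined (possibly infinite) limit in $[0,+\infty]$ as $I\to +\infty$. Since every $P \in \mathbf{P}$ satisfies $P(I)/L(I) \to 1$ for some $L$-function $L$, this limit transfers to pairs in $\mathbf{P}$.

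First I would, given $P_1,P_2\in \mathbf{P}$, pick $L$-functions $L_1,L_2$ with $P_i(I)/L_i(I)\to 1$. Writing
\[
\frac{P_1(I)}{P_2(I)} = \frac{P_1(I)}{L_1(I)}\cdot \frac{L_1(I)}{L_2(I)}\cdot \frac{L_2(I)}{P_2(I)},
\]
the first and third factors tend to $1$, so $\lim_{I\to+\infty} P_1(I)/P_2(I)$ exists in $[0,+\infty]$ and equals $\lim_{I\to+\infty} L_1(I)/L_2(I)$ (this is the step that invokes the appendix on $L$-functions; I would cross-reference Appendix \ref{appendix1}).

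Next I would do the case analysis on the value $c := \lim P_1/P_2 \in [0,+\infty]$. If $c < 1$ (including $c=0$), then $P_1 \prec P_2$ by the definition given in the statement. If $c > 1$ (including $c=+\infty$), then by considering $\lim P_2/P_1 = 1/c < 1$ we obtain $P_2 \prec P_1$, i.e.\ $P_1 \succ P_2$. If $c = 1$, then $P_1 \sim P_2$ by definition. These three cases are mutually exclusive and exhaustive, which is exactly the required trichotomy.

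The only nontrivial ingredient is the existence of $\lim L_1/L_2$, which is a standard property of Hardy's $L$-functions: ratios, reciprocals and products of $L$-functions are again $L$-functions up to eventual sign, and every $L$-function is eventually monotonic with a limit in $[0,+\infty]$. I would state this cleanly as a lemma in the appendix (or cite \cite{Hardy1910}) rather than reproving it here, since the paper has already indicated that the appendix contains the needed facts. No calculation is required beyond the three-term factorization above.
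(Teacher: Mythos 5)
Your proposal is correct and follows the same route as the paper: the paper's Appendix A justifies the lemma exactly by noting that the quotient of two $L$-functions is again an $L$-function and hence has a (possibly infinite) limit, then transferring this to $\mathbf{P}$ via $P_i/L_i\to 1$ and reading off the trichotomy from the value of the limit. Your three-factor decomposition just makes the transfer step explicit.
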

Similarly,	we also assume that $p_A(l),p_B(l)$ behave like a L-function in the tails. 
That is, 
\begin{assumption}
(1)$\exists$ L-function $L_1(l)$ such that $\lim_{l\to +\infty} p_A(l)/L_1(l)=1$; (2)$\exists$ L-function $L_2(l)$ such that $\lim_{l\to 0^+} p_A(l)/L_2(1/l)=1$.	
	Similar for $p_B$.
\end{assumption}

Lastly, we impose some technical assumptions.
\begin{assumption}
	(1) $P(0)=c>0$. That is, despite that different payoff functions provide different incentives for information, they all pay the same base salary for zero information.
	(2) $p_A(l),p_B(l)$ and $P(I)$ are all continuously differentiable.
\end{assumption}

\section{Model Solution}
\label{section3}
Our first job is to show that each researcher's individual optimization problem 
\begin{eqnarray}
	\label{eqn2.0}
	\max_{l\in (0,+\infty)} P(I(u,l))[up_A(l)+(1-u)p_B(l)]
\end{eqnarray}
has solutions for each $u$ despite the constraint set is open. This relies on the following special property of $I(u,l)$.

\begin{lemma}
	\label{lemma3.1}
	For each $u\in (0,0.5)$, 
	\begin{eqnarray}
		\forall l \in (0,+\infty), I(u,l)<\lim_{l\to 0} I(u,l)=-\log u.
	\end{eqnarray}	
	Similarly, for each $u\in [0.5,1)$,
	\begin{eqnarray}
		\forall l \in (0,+\infty), I(u,l)<\lim_{l\to +\infty} I(u,l)=-\log (1-u).
	\end{eqnarray}
\end{lemma}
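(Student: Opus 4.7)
The plan is to pass to the posterior variable $u'=u/(u+(1-u)l)$, the posterior probability of state $A$ given that project $l$ succeeds. As $l$ ranges over $(0,+\infty)$, $u'$ ranges bijectively over $(0,1)$, with $u'\to 1$ as $l\to 0^+$ and $u'\to 0$ as $l\to+\infty$. Since $\log(1/(u+(1-u)l))=\log(u'/u)$ and $\log(l/(u+(1-u)l))=\log((1-u')/(1-u))$, the formula for $I(u,l)$ collapses to the one-variable expression
$$f(u'):=u'\log\frac{u'}{u}+(1-u')\log\frac{1-u'}{1-u},$$
so the whole lemma reduces to a statement about $f$ on $(0,1)$.

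First I would verify the boundary limits using $x\log x\to 0$ as $x\to 0^+$: $\lim_{u'\to 1}f(u')=-\log u$ and $\lim_{u'\to 0}f(u')=-\log(1-u)$. Under the reparametrization these translate into the two limits asserted in the lemma. Extending $f$ continuously then yields a function on $[0,1]$ with $f(0)=-\log(1-u)$ and $f(1)=-\log u$.

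Next I would establish strict convexity by direct differentiation: $f''(u')=1/u'+1/(1-u')>0$ on $(0,1)$, which lifts to strict convexity on $[0,1]$ by continuity. For any $u'\in(0,1)$, writing $u'=u'\cdot 1+(1-u')\cdot 0$ and applying strict convexity,
$$f(u')<u'\,f(1)+(1-u')\,f(0)\le\max\{f(0),f(1)\}.$$
A case split finishes the argument: for $u\in(0,0.5)$ one has $-\log u>-\log(1-u)$, so the maximum is $-\log u$, giving the first bound; for $u\in[0.5,1)$ the reverse inequality (equality possible only at $u=0.5$) makes the maximum $-\log(1-u)$, giving the second.

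Everything here is essentially routine; the only mild subtlety is making sure the strict inequality in the display above genuinely holds. It does, because the first $<$ comes from strict convexity and is already strict before the bound by the max is invoked (the second $\le$ may be equality, e.g.\ when $u=0.5$, but that is harmless). The reparametrization step is the conceptual heart of the proof: it turns an unpromising-looking bound on an unbounded parameter $l$ into a tight bound on a convex function over a bounded interval, where both the extrema and the strict inequality are transparent.
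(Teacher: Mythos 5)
Your proof is correct, but it takes a different route from the paper's. The paper computes $\partial I/\partial l = \frac{u(1-u)\log l}{[u+(1-u)l]^2}$ directly (its equation for $\partial I/\partial l$ and Corollary \ref{coro3.60}), reads off that $I(u,\cdot)$ strictly decreases on $(0,1)$ and strictly increases on $(1,+\infty)$, and concludes that the supremum equals the larger of the two boundary limits $-\log u$ and $-\log(1-u)$ and is never attained at finite $l$; the case split on $u\lessgtr 0.5$ is the same as yours. You instead change variables to the posterior $u'$ and invoke strict convexity of the KL divergence $f(u')$ on $[0,1]$ together with the chord bound $f(u')<u'f(1)+(1-u')f(0)$. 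The two computations are equivalent under your substitution (indeed $f'(u')=-\log l$, so your convexity in $u'$ is exactly the paper's sign change of $\partial I/\partial l$ at $l=1$), but the organizing principles differ. Your version is more conceptual and self-contained, and it delivers part (3) of Corollary \ref{coro3.60} in one stroke; the paper's monotonicity formulation carries slightly more information that it reuses later (e.g., the bound $I_k\leq\max\{-\log u_k, I(u_k,M)\}$ in Lemma \ref{lemma3.80} relies on monotonicity on each piece, not just the global max). The only point to state carefully in your write-up is that strict convexity on the open interval plus continuity at the endpoints yields the \emph{strict} chord inequality at interior points; this is standard, and your parenthetical about where the strictness enters is accurate.
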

This lemma says two things. First, shift all the weight to the underweighted state generates the supremum of information. For example, if current belief $u$ underweight state $A$ ($u<0.5$), and we could use $l=0$ to shift all the weight to $A$, the associated information $I(u,0)=-\log u$ is larger than any achievable information. Second, the supremum of information is finite. 

This implies the obtainable payment $P(I(u,l))$ is bounded from above for each $u$. Thus, we could safely ignore extreme $l$ as $p_A(l),p_B(l)$ vanishes there. Nothing is lost if we optimize on a large enough compact set. In fact, we have
\begin{proposition}
	\label{prop2.1}
	There exists a compact-valued, continuous correspondence $D(u)$ defined on $u\in (0,1)$, such that the original optimization problem \ref{eqn2.0} is equivalent to 
	\begin{eqnarray}
		\max_{l\in D(u)} P(I(u,l))[up_A(l)+(1-u)p_B(l)]
	\end{eqnarray}
for each $u\in (0,1)$.
\end{proposition}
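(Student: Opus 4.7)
My plan is to localize the optimization to a compact interval $D(u) = [a(u), b(u)] \subset (0,\infty)$ containing every maximizer, with $a, b$ continuous. The construction will threshold the success-probability factor $f(u, l) := up_A(l) + (1-u)p_B(l)$ at a carefully chosen continuous level.

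First I would establish a priori bounds. Writing $\phi(u, l) := P(I(u,l)) f(u, l)$, Lemma \ref{lemma3.1} gives $P(I(u,l)) \leq K(u) := P(\max(-\log u, -\log(1-u)))$, which is continuous on $(0,1)$. Since $I(u, 1) = 0$,
$$\phi(u, 1) = c \cdot f(u, 1) = c[up_A(1) + (1-u)p_B(1)] =: \eta(u)$$
is continuous and strictly positive on $(0,1)$, so $\sup_l \phi(u, l) \geq \eta(u)$. Any $l$ with $\phi(u, l) \geq \eta(u)$ then satisfies $f(u, l) \geq \eta(u)/K(u) =: \beta(u) > 0$. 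By Assumption \ref{assumption1.1}(2), the super-level set $\{l : f(u,l) \geq \beta(u)\}$ is bounded and bounded away from $0$, so existence of a maximizer in $(0, \infty)$ follows from a standard continuity/compactness argument.

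Next I would construct the continuous endpoints using the unimodality in Assumption \ref{assumption1.1}(1): $f(u, \cdot)$ is strictly increasing on $(0, \min(l_A, l_B))$ and strictly decreasing on $(\max(l_A, l_B), +\infty)$, with $f \to 0$ at both ends. Setting
$$\beta^*(u) := \tfrac{1}{2} \min\bigl(\beta(u),\ f(u, \min(l_A, l_B)),\ f(u, \max(l_A, l_B))\bigr),$$
which is continuous and strictly positive in $u$, I would take $a(u) \in (0, \min(l_A, l_B))$ and $b(u) \in (\max(l_A, l_B), +\infty)$ as the unique solutions of $f(u, l) = \beta^*(u)$ in their respective monotone regions. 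These exist by the intermediate value theorem and are continuous in $u$ by the implicit function theorem, since $\partial_l f(u, l) \neq 0$ throughout the outer monotone tails.

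Setting $D(u) := [a(u), b(u)]$ yields a compact-valued continuous correspondence. Outside $D(u)$, $f(u, \cdot) < \beta^*(u) \leq \beta(u)$, so every maximizer lies inside $D(u)$, giving the equivalence. The main obstacle is exactly this continuity of the correspondence: naive endpoint choices could jump as $u$ varies, because the super-level set of $f(u, \cdot)$ may gain or lose components in the mixed region $(\min(l_A, l_B), \max(l_A, l_B))$ where $p_A'$ and $p_B'$ have opposite signs. Pinning $a(u), b(u)$ into the outer monotone tails via Assumption \ref{assumption1.1}(1) is what resolves this.
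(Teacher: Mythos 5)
Your construction is correct and is essentially the paper's own argument: both proofs localize the problem by bounding $P(I(u,l))$ through Lemma \ref{lemma3.1}, comparing against a fixed benchmark project, and solving a continuous threshold equation in the outer monotone tails of the success probabilities (Assumption \ref{assumption1.1}) to obtain continuous, uniquely determined endpoints. The only cosmetic differences are that the paper thresholds $p_A(l)$ and $p_B(l)$ separately against the benchmarks $EP(u,l_2)$ and $EP(u,l_1)$ for fixed reference projects $l_2<\min\{1,l_A\}$ and $l_1>\max\{1,l_B\}$, whereas you threshold the mixture $up_A(l)+(1-u)p_B(l)$ against the single benchmark $EP(u,1)$.
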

As a result of continuous maximum theorem, we have the following corollary
\begin{corollary}
	The solution correspondence of \ref{eqn2.0}, $l^*(u)$, is compact-valued and u.h.c. Besides, the optimal value function $EP(u,l^*(u))$ is continuous in $u$.
\end{corollary}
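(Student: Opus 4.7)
The plan is to derive this corollary as an immediate application of Berge's maximum theorem (the ``continuous maximum theorem''), using Proposition \ref{prop2.1} to replace the open constraint set $(0,+\infty)$ with the well-behaved compact-valued continuous correspondence $D(u)$. The version of Berge's theorem I have in mind requires two inputs: (i) a jointly continuous objective $f(u,l)$ on the graph of the constraint correspondence, and (ii) a compact-valued, continuous (both u.h.c.\ and l.h.c.) constraint correspondence. It then concludes that the argmax correspondence is nonempty, compact-valued, and u.h.c., and that the value function is continuous.

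First, I would invoke Proposition \ref{prop2.1} to rewrite the problem as
\begin{eqnarray}
\max_{l \in D(u)} P(I(u,l))\bigl[u\, p_A(l) + (1-u)\, p_B(l)\bigr],
\end{eqnarray}
with $D(u)$ compact-valued and continuous on $(0,1)$. Second, I would verify that the objective
\begin{eqnarray}
f(u,l) := P(I(u,l))\bigl[u\, p_A(l) + (1-u)\, p_B(l)\bigr]
\end{eqnarray}
is jointly continuous on $(0,1)\times(0,+\infty)$. The expression for $I(u,l)$ given earlier is a finite sum/product/quotient of $\log$'s and polynomials in $(u,l)$ with denominators $u+(1-u)l$ bounded away from zero on $(0,1)\times(0,+\infty)$, so $I$ is continuous there. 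By the differentiability assumption, $P$, $p_A$, and $p_B$ are continuous, so the composition $P\circ I$ and the linear combination $u p_A(l)+(1-u)p_B(l)$ are continuous, and hence so is their product $f$.

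Third, I would directly apply Berge's maximum theorem to $f$ and $D$: the conclusion delivers at once that the argmax correspondence $l^*(u)$ is nonempty, compact-valued and u.h.c.\ on $(0,1)$, and that the value function $u\mapsto \max_{l\in D(u)} f(u,l) = E P(u,l^*(u))$ is continuous on $(0,1)$. Since Proposition \ref{prop2.1} guarantees that this restricted problem has exactly the same optimizers and value as the original problem (\ref{eqn2.0}), the same conclusions transfer back to $l^*(u)$ and $EP(u,l^*(u))$.

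There is no real obstacle here; the whole argument is a bookkeeping exercise once Proposition \ref{prop2.1} is in hand. The only points worth checking carefully are that $D$ is continuous (not merely u.h.c.) as stated in Proposition \ref{prop2.1}, and that the objective extends continuously across any boundary points of the graph of $D$ that lie inside $(0,1)\times(0,+\infty)$; both are immediate from the statements already available.
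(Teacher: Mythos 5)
Your proposal is correct and follows exactly the paper's route: the paper likewise derives this corollary as an immediate consequence of Berge's maximum theorem applied to the restricted problem over the compact-valued continuous correspondence $D(u)$ from Proposition \ref{prop2.1}, with the joint continuity of the objective taken for granted from the continuity assumptions on $P$, $p_A$, $p_B$ and the explicit formula for $I(u,l)$. Your added verification of joint continuity is a harmless elaboration of what the paper leaves implicit.
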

To abbreviate notation, here we use $EP(u,l)$ to stand for $P(I(u,l))[up_A(l)+(1-u)p_B(l)]$, which is the expected payoff under $P(I)$.

\medskip

In analyzing the long run behavior of beliefs, it is easier to use the log likelihood ratio $\lambda_t^\varepsilon=\log \frac{1-u_t^\varepsilon}{u_t^\varepsilon}$. We say that learning is complete iff $\lambda_t^\varepsilon\to -\infty$. 

The expected change of $\lambda_t^\varepsilon$ is
\begin{eqnarray}
	\label{eqn2.1}
	E_t[\lambda_{t+1}^\varepsilon-\lambda_t^\varepsilon|\lambda_t^\varepsilon]=p(p_A(l^*)+\varepsilon)\log l^*+(1-p(p_A(l^*)+\varepsilon))\log \frac{1-pp_B(l^*)}{1-pp_A(l^*)}.
\end{eqnarray}
Rearrange the terms, we can separate the expected change into two terms
\begin{eqnarray}
	\label{eqn2.2}
	&&E_t[\lambda_{t+1}^\varepsilon-\lambda_t^\varepsilon|\lambda_t^\varepsilon]\notag\\
	&=&\Big(pp_A(l^*)\log l^*+(1-pp_A(l^*))\log \frac{1-pp_B(l^*)}{1-pp_A(l^*)}\Big)
	+\varepsilon p\log \frac{l^*(1-pp_A(l^*))}{1-pp_B(l^*)}
\end{eqnarray}
The first term is the expected change without p-hacking $E[\lambda_{t+1}^0-\lambda_{t+1}^0|\lambda_t^0]$. The second term is the distortion term associated with p-hacking.
Using Jensen's inequality, it is direct to verify that $E[\lambda_{t+1}^0-\lambda_{t+1}^0|\lambda_t^0]<0$.
It is natural to ask, if the intensity $\varepsilon$ is sufficiently small, is the distortion term also sufficiently small so that $E_t[\lambda_{t+1}^\varepsilon-\lambda_t^\varepsilon|\lambda_t^\varepsilon]<0$ and $\lambda_t^\varepsilon$ is a supermartingale that converges to $-\infty$?

The following lemma gives a sufficient condition for $\lambda_t^\varepsilon$ to be supermartingale.
\begin{lemma}
	\label{lemma2.1}
	If the set of all projects that could be chosen-$\{l^*(u)|u\in (0,1)\}\equiv L^*$-is bounded away from $0$ and $+\infty$ ($L^*$ constricted), then for any sufficiently small $\delta$, 
	$\exists c>0$, such that $E[\lambda_{t+1}^\varepsilon-\lambda_t^\varepsilon|\lambda_t^\varepsilon]\leq -\delta <0$ for all $\varepsilon<c$. 
	
	In other words, as the intensity of p-hacking is small, process $\lambda_t^\varepsilon$ is a supermartingale.
\end{lemma}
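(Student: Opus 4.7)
The plan is to exploit the decomposition in equation (2.2): write the expected change as $F(l^*) + \varepsilon G(l^*)$, where
\begin{equation*}
F(l) = pp_A(l)\log l + (1-pp_A(l))\log \frac{1-pp_B(l)}{1-pp_A(l)}, \quad G(l) = p\log \frac{l(1-pp_A(l))}{1-pp_B(l)},
\end{equation*}
and control each piece separately on the compact range of optimal projects. The idea is that $F$ (the no-p-hacking drift) is uniformly bounded above by a strictly negative constant over $L^*$, while $G$ (the coefficient of the distortion) is uniformly bounded in absolute value, so shrinking $\varepsilon$ shrinks the distortion below any prescribed level.

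First I would verify that $F$ is continuous on $(0,\infty)$ (immediate from the continuous differentiability of $p_A, p_B$) and that $F(l) \le 0$ with equality iff $l = 1$. The cleanest way is to recognize $F(l)$ as $-KL(Q_A \| Q_B)$, where $Q_A, Q_B$ are the two-point measures on $\{\text{success}, \text{failure}\}$ induced by state $A$ and state $B$ (using the consistency constraint $p_B(l)/p_A(l)=l$ to identify the success likelihood ratio). Strict positivity of KL-divergence when $Q_A \ne Q_B$ yields $F(l) < 0$ for $l \ne 1$; at $l=1$ both measures coincide and $F(1)=0$.

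Next I would use the hypothesis that $L^*$ is bounded away from $0$ and $\infty$ (together with the fact that optimization strictly prefers an informative project to $l=1$, since $P$ is strictly increasing and $P(I(u,1))=P(0)=c$ is strictly dominated by any nearby informative $l$) to conclude $\overline{L^*}$ is a compact subset of $(0,\infty)\setminus\{1\}$. By continuity, $F$ attains its maximum on $\overline{L^*}$ at some value $-2\delta_0 < 0$. For any $\delta \in (0,\delta_0]$, we then have $F(l^*) \le -2\delta$ uniformly over $l^* \in L^*$. Meanwhile, on the same compact set, $G$ is continuous and bounded: $p < 1$ gives $1 - pp_A(l), 1 - pp_B(l) \ge 1-p > 0$, and $l$ stays in a compact subinterval of $(0,\infty)$, so $|G(l)| \le M$ for some $M < \infty$.

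Setting $c = \delta / M$, any $\varepsilon < c$ yields $|\varepsilon G(l^*)| \le \delta$, hence
\begin{equation*}
E[\lambda_{t+1}^\varepsilon - \lambda_t^\varepsilon \mid \lambda_t^\varepsilon] = F(l^*) + \varepsilon G(l^*) \le -2\delta + \delta = -\delta,
\end{equation*}
establishing the supermartingale property. The main subtlety to keep honest is the uniform separation of $L^*$ from the point $l=1$: without it, $\sup_{L^*} F$ could be $0$ and no $\delta > 0$ would work. Everything else is a routine compactness-plus-continuity argument enabled by the ``constricted'' assumption on $L^*$.
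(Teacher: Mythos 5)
Your decomposition and overall strategy are exactly the paper's: it too writes the drift as the Jensen (no-hacking) term plus $\varepsilon$ times a distortion term and bounds each uniformly over a compact set containing $L^*$. The genuine problem is the point $l^*=1$, which you correctly flag but do not actually dispose of. Because the consistency condition $p_B(l)=l\,p_A(l)$ forces $p_B(1)=p_A(1)$, one has $F(1)=0$ and $G(1)=0$; moreover, near $l=1$ the KL term $F$ vanishes to \emph{second} order in $(l-1)$ while $G$ vanishes only to \emph{first} order with a positive coefficient, so for any fixed $\varepsilon>0$ the sum $F(l)+\varepsilon G(l)$ is strictly positive for $l$ slightly above $1$. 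Hence a uniform bound $\leq-\delta$ genuinely requires $\overline{L^*}$ to be separated from $1$, and this is not implied by the stated hypothesis that $L^*$ is bounded away from $0$ and $+\infty$. (The paper's own proof is no more careful here: it asserts via Jensen that the $\varepsilon=0$ drift is strictly negative on all of $(0,+\infty)$, which is false at $l^*=1$, and Appendix \ref{appendixB} explicitly entertains the possibility that ``$l^*_{t-1}$ could be $1$'' when bounding the conditional variance.)

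Your proposed patch --- that $l=1$ is strictly dominated by a nearby informative project, hence $\overline{L^*}\subset(0,\infty)\setminus\{1\}$ --- does not close the gap. Pointwise suboptimality of $l=1$ for each $u$ does not prevent $l^*(u_k)\to 1$ along some sequence $u_k$, which is what a uniform bound $F\leq-2\delta_0$ on $\overline{L^*}$ requires. The pointwise claim itself is also not immediate: $P(I(u,l))>P(0)=c$ for $l\neq1$ does not by itself dominate the accompanying change in the success probability, and since $\partial I/\partial l$ vanishes at $l=1$, the first-order payoff gain from perturbing $l$ is $c\,[u p_A'(1)+(1-u)p_B'(1)]$, whose sign depends on $u$ and which is zero at one interior $u$. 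To repair the argument you should either strengthen ``constricted'' to include being bounded away from $1$, or prove separately that the optimal correspondence never accumulates at $1$ (e.g.\ via the first-order condition together with Lemmas \ref{lemma3.5} and \ref{lemma3.6}, which give $l^*(u)\to l_A\neq1$ and $l^*(u)\to l_B\neq1$ at the belief extremes, leaving only interior accumulation points to rule out). Everything else --- identifying $F$ as a negative KL divergence, bounding $|G|$ via $1-pp_A,\,1-pp_B\geq 1-p>0$ on a compact $l$-interval, and taking $c=\delta/M$ --- is correct and coincides with the paper.
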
 
\begin{proof}
	First, $E[\lambda^0_{t+1}-\lambda^0_t|\lambda^0_t]$ is a function of $l^*$ which is defined and continuous on $(0,+\infty)$. Using Jensen's inequality, we conclude that $E[\lambda^0_{t+1}-\lambda^0_t|\lambda^0_t]<0$ on $l^*\in (0,+\infty)$. As a result, for any compact subset $S$ in $(0,+\infty)$, $E[\lambda^0_{t+1}-\lambda^0_t|\lambda^0_t]\leq s<0$ for all $l^*\in S$. Since $L^*$ is constricted, it is contained in a compact subset. Thus $E[\lambda^0_{t+1}-\lambda^0_t|\lambda^0_t]$ is strictly bounded below $0$ for all $l^*\in L^*$.
	
	Similarly, the term $\log \frac{l^*(1-pp_A(l^*))}{1-pp_B(l^*)}$ is also defined and continuous on $l^*\in (0,+\infty)$. It is positive only if $l^*>1$. As $L^*$ is constricted, the term must be bounded from above. Thus, the distortion term is small provided that $\varepsilon$ is sufficiently small. 
	
	The lemma follows directly. 
\end{proof}

Not every supermartingale converges to $-\infty$. But in our case, we could use martingale central limit theorem to show that $\lambda_t^\varepsilon$ must converge to $-\infty$ almost surely. A one sentence summarization of the proof is that $\lambda_t^\varepsilon$ is a supermartingale whose drift dominates its variance. Interested readers could refer to appendix \ref{appendixB} for details.
\begin{proposition}
	\label{thm2.2}
	If $L^*$ is constricted, then as long as the intensity of p-hacking $\varepsilon$ is sufficiently small, then $\lambda_t^{\varepsilon}\to -\infty$ almost surely, that is,
	learning is complete a.s.
\end{proposition}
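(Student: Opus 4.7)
The plan is to use the Doob decomposition of $\lambda_t^\varepsilon$ and argue that its predictable drift decreases linearly to $-\infty$ while the associated martingale grows only sublinearly, so that $\lambda_t^\varepsilon \to -\infty$ a.s. Specifically, I would write
\[
\lambda_t^\varepsilon \;=\; \lambda_0 \;+\; A_t \;+\; M_t, \qquad A_t \;=\; \sum_{s=0}^{t-1} E[\lambda_{s+1}^\varepsilon - \lambda_s^\varepsilon \mid \lambda_s^\varepsilon],
\]
where $M_t$ is a zero-mean martingale by construction. Lemma \ref{lemma2.1} already delivers a $\delta>0$, depending only on $L^*$ and the upper bound on $\varepsilon$, with each conditional-expected increment $\leq -\delta$, hence $A_t \leq -\delta t$ for every $t$.

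Next I would verify that the one-step increments $\Delta_t := \lambda_{t+1}^\varepsilon - \lambda_t^\varepsilon$ are uniformly bounded. The increment takes one of the values $\log l^*$, $\log\frac{1-pp_B(l^*)}{1-pp_A(l^*)}$, or $0$ depending on whether the arriving project succeeds, fails, or no researcher arrives, so it suffices to bound the first two expressions uniformly over $l^* \in L^*$. Constrictedness of $L^*$ places it inside a compact subset $[l_{\min},l_{\max}] \subset (0,+\infty)$, so $|\log l^*|$ is bounded; continuity of $p_A,p_B$ on this compact set together with $p<1$ keeps both $1-pp_A(l^*)$ and $1-pp_B(l^*)$ uniformly bounded away from $0$, so the log-ratio term is bounded too. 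Therefore $|\Delta_t| \leq K$ almost surely for some constant $K$, and the martingale $M_t$ inherits uniformly bounded increments.

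Finally, for a martingale with uniformly bounded increments the strong law $M_t / t \to 0$ a.s.\ is standard (a direct consequence of Azuma--Hoeffding combined with Borel--Cantelli along a geometric subsequence). Combined with the bound on $A_t$,
\[
\frac{\lambda_t^\varepsilon}{t} \;\leq\; \frac{\lambda_0}{t} + \frac{M_t}{t} - \delta \;\longrightarrow\; -\delta \;<\; 0\quad \text{a.s.},
\]
which forces $\lambda_t^\varepsilon \to -\infty$ almost surely. The only genuinely nontrivial step is the uniform bound on the increments; once that is in place the "martingale CLT" language in the paper collapses to the bounded-increment SLLN. It is also clear from this sketch why constrictedness of $L^*$ is indispensable: without it, $\log l^*$ could blow up and the bounded-increment conclusion (and hence the $o(t)$ martingale bound) would fail, which is precisely the regime in which the negative result of the paper operates.
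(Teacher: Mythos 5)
Your proposal is correct, and it takes a genuinely more elementary route than the paper. Both arguments start from the same Doob decomposition and the same two inputs: the uniform drift bound $E[\lambda_{t+1}^\varepsilon-\lambda_t^\varepsilon\mid\mathcal{F}_t]\le-\delta$ from lemma \ref{lemma2.1}, and the uniform bound on the one-step increments that constrictedness of $L^*$ provides. From there the paper invokes a martingale central limit theorem (Ouchti's quantitative CLT along stopping times $\tau_\nu$ at which the quadratic variation reaches level $\nu$), which forces it to establish that $\sum_t\sigma_t^2$ diverges almost surely; since $\sigma_t^2$ can vanish when $l_t^*$ is near $1$, this in turn requires the auxiliary argument that $l_t^*\to1$ has probability zero, which itself rests on the Azuma-based convergence-in-probability lemma \ref{lemmaB1} and a first-order-condition analysis. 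You bypass that entire chain: the bounded-increment strong law $M_t/t\to0$ a.s.\ (Azuma--Hoeffding gives $P(|M_t|\ge\epsilon t)\le 2e^{-\epsilon^2 t/(8K^2)}$, summable in $t$, so Borel--Cantelli applies directly, without even needing a subsequence) needs no lower bound on the conditional variances at all, and combined with $A_t\le-\delta t$ it yields $\limsup_t\lambda_t^\varepsilon/t\le-\delta<0$ a.s., hence $\lambda_t^\varepsilon\to-\infty$. What your approach buys is a substantially shorter proof that also delivers the linear escape rate for free; what the paper's CLT machinery buys is distributional information about $\lambda_t^\varepsilon$ (approximate normality around the drift), which is not needed for the proposition as stated. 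One minor modelling quibble: the paper folds the no-arrival event into the ``failure'' branch, so the increment takes only two values, $\log l^*$ with probability $p(p_A(l^*)+\varepsilon)$ and $\log\frac{1-pp_B(l^*)}{1-pp_A(l^*)}$ otherwise, rather than the three values you list; this does not affect your bound or the conclusion.
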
 

So, could we expect that $L^*$ is always constricted? A little thought casts some doubt. If the current belief has assigned enough weight to some state, then pushing beliefs further to this state generates almost no information. In other words, researchers naturally have the incentive to go contrarian when belief is extreme. The only thing that could prevent them from going contrarian is that the success probability of a strong contrarian evidence is small. 
But if $P(I)$ grows fast enough, the small success probability can be compensated by the large reward upon success, and we can no longer expect $L^*$ to be constricted. What would happen in this case?

We have the following theorem
\begin{theorem}
	\label{thm2.7}
	If the payment function $P(I)$ grows fast enough, then the optimal project $l^*$ goes to $+\infty$ as $\lambda^\varepsilon_t$ goes to $-\infty$. In this case, as long as the intensity of p-hacking is not $0$, the event that learning is complete happens with probability $0$. 
		
	In other words, under fast growing payoff functions, learning is never complete as long as there is p-hacking.
\end{theorem}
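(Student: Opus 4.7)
The plan is to prove the two clauses in sequence and then combine them via a drift/recurrence argument on $\lambda_t^\varepsilon$.

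For the first clause (that $l^*(u) \to +\infty$ as $u \to 1$, equivalently as $\lambda_t^\varepsilon \to -\infty$), I would compare the expected payoff of any bounded project with that of a contrarian one. For bounded $l$, $I(u,l) \to 0$ as $u \to 1$, so $P(I(u,l)) \to P(0) = c$ and the expected payoff stays of order $1$; the same holds as $l \to 0$ since $I(u,0) = -\log u \to 0$. For the contrarian choice $l_u = C/(1-u)$ with $C > 0$, direct computation gives $I(u, l_u) \to \text{const}(C) + \tfrac{C}{1+C}\log\tfrac{1}{1-u}$, while the overall success probability equals $(u + (1-u)l_u)\, p_A(l_u) = (u+C)\, p_A(l_u) \sim (1+C)\, L_1(l_u)$ using $p_B = l\, p_A$ and Assumption 1.3. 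Hence the contrarian payoff is of order $P\bigl(\log\tfrac{1}{1-u}\bigr) \cdot L_1\bigl(1/(1-u)\bigr)$. The hypothesis ``$P$ grows fast enough'' is precisely that this product diverges as $u \to 1$, in which case the contrarian payoff eventually beats every other one, forcing $l^*(u) \to \infty$.

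For the second clause I would analyze the decomposition (\ref{eqn2.2}) as $l^* \to \infty$. Using $p_B(l) = l\, p_A(l)$ and Taylor-expanding in the small quantities $p p_A(l), p p_B(l)$, the unbiased term reduces to $p\, p_A(l)\,[\log l + 1 - l] + o(p_A(l)) = O(l \cdot p_A(l)) = O(p_B(l))$, which vanishes. The distortion term $\varepsilon p \log\tfrac{l(1-p p_A)}{1-p p_B}$ behaves like $\varepsilon p \log l \to +\infty$. Consequently, for any $\delta > 0$ there exists $L$ with $E[\lambda_{t+1}^\varepsilon - \lambda_t^\varepsilon \mid \lambda_t^\varepsilon] \geq \delta$ whenever $l_t^* > L$; combined with the first clause, there is $M$ such that the drift exceeds $\delta$ whenever $\lambda_t^\varepsilon < -M$.

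To close the argument, suppose for contradiction that $\Pr(\lambda_t^\varepsilon \to -\infty) > 0$. On that event there is a finite random $T$ with $\lambda_t^\varepsilon < -M$ for all $t \geq T$, so every one-step drift from time $T$ on is at least $\delta$. A strong law of large numbers for martingale differences applied to $\lambda_t^\varepsilon$ (the increments take the value $\log l_s^*$ with probability $\sim p\varepsilon$ and an $O(1)$ value otherwise, so once $L$ is chosen sufficiently large the positive drift $\varepsilon p \log l_s^*$ dominates the fluctuation of the martingale part) then forces $\lambda_{T+t}^\varepsilon \to +\infty$, contradicting $\lambda_t^\varepsilon < -M$. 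I expect the main obstacle to be the first clause: translating ``grows fast enough'' into the precise L-function comparison $P(\log\tfrac{1}{1-u}) \cdot L_1(1/(1-u)) \to \infty$, and ensuring this really forces $l^*(u)$ to diverge as $u \to 1$ rather than merely admitting divergent near-optima along a subsequence. Given the first clause, the drift asymptotics and recurrence argument are comparatively routine.
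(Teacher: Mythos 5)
Your handling of the first clause and of the drift asymptotics tracks the paper's own route: you compare the payoff of uniformly bounded projects (which, as $u\to 1$, tends to $cp_A(l_A)$, cf.\ Lemma \ref{lemma3.5}) against a contrarian choice $l\sim C/(1-u)$ whose information is $\tfrac{C}{1+C}\log\tfrac{1}{1-u}+O(1)$, and you correctly find that once $l^*$ is large the unbiased part of the drift is $O(p_B(l^*))\to 0$ while the distortion part behaves like $\varepsilon p\log l^*\to+\infty$. This is in substance Lemma \ref{lemma3.7}, Proposition \ref{prop4.3} and the opening computation of Appendix \ref{appendixC}.

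The gap is in your closing step. First, your $T$ is not a stopping time and $\{\lambda_t^\varepsilon\to-\infty\}$ is a tail event, so conditioning on it destroys the martingale structure you then want to invoke; the paper avoids this by decomposing the learning event into countably many events indexed by finite histories $h_s$ after which $\lambda_t^\varepsilon$ stays below a threshold $\bar\lambda$, and working with the stopped process $\lambda^\varepsilon_{t\wedge\tau}$. Second, and more substantively, the martingale SLLN you appeal to is not ``comparatively routine'' here because the increments are unbounded: the up-jump equals $\log l_s^*$, which diverges along any path with $\lambda_s^\varepsilon\to-\infty$, so the one-step conditional variance is of order $\varepsilon p(\log l_s^*)^2$ while the one-step drift is only of order $\varepsilon p\log l_s^*$. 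Nothing in the hypotheses prevents $l^*(\lambda)$ from exploding so fast as $\lambda\to-\infty$ that $\log l_s^*$ grows geometrically in $s$, in which case the normalization condition $\sum_s \sigma_s^2/A_s^2<\infty$ underlying the SLLN fails and a single fluctuation is comparable to the entire accumulated drift, so ``drift dominates fluctuation'' is not justified. The paper's fix is a truncation: replace the up-jump $\log l_t^*$ by the constant $\log\overline{l}$ of Lemma \ref{lemmaC1}, which yields a \emph{bounded-increment} submartingale $X_t\le\lambda^\varepsilon_{t\wedge\tau}$ (it is still a submartingale because even the truncated drift $\varepsilon p\log\overline{l}-\delta$ is positive), and then apply the submartingale convergence theorem to conclude that $X_t$, hence $\lambda^\varepsilon_{t\wedge\tau}$, tends to $-\infty$ with probability zero. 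You need this truncation, or an equivalent optional-stopping bound, to close your argument; the rest of your outline is sound.
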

We delay the analysis of the growth rate of payoff functions and the constrictedness of $L^*$ into the next section. That 
\begin{eqnarray}
	\label{eqn2.11}
	\lim_{\lambda_t^\varepsilon\to -\infty} l^*(\lambda_t^\varepsilon)=+\infty
\end{eqnarray}
implies complete learning never happens can be seen simply. For any $\varepsilon>0$, \ref{eqn2.11} implies that $E_t[\lambda_{t+1}^\varepsilon-\lambda_t^\varepsilon|\lambda_t^\varepsilon]$ are eventually positive for sufficiently negative $\lambda_t^\varepsilon$. Intuitively, this prevent $\lambda_t^\varepsilon$ from going to $-\infty$. A rigorous proof can be found in the appendix \ref{appendixC}.

The sibling of theorem \ref{thm2.7} is also true. 
\begin{theorem}
	\label{thm2.8}
	If the payment function $P(I)$ grows slowly enough, then $L^*$ is constricted. Learning is complete a.s. if the intensity of p-hacking is low.
\end{theorem}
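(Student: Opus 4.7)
The plan is to establish that $L^*$ is constricted under the slow-growth hypothesis on $P$, after which Proposition \ref{thm2.2} delivers learning-complete a.s.\ for small $\varepsilon$. Since $l^*(u)$ is u.h.c.\ with compact values (corollary to Proposition \ref{prop2.1}), $l^*([\delta,1-\delta])$ is a compact subset of $(0,+\infty)$ for every $\delta>0$, so it remains to bound $l^*(u)$ away from $+\infty$ and $0$ as $u\to 1$ and $u\to 0$. These two limits are symmetric under swapping the roles of $A$ and $B$, so I focus on the first.

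The anchor is the uniform lower bound $\sup_l EP(u,l)\ge EP(u,1)=c\,p^*$, where $p^*:=p_A(1)=p_B(1)$ (equal by consistency $p_B(l)=l\,p_A(l)$) and $I(u,1)=0$. To deduce $l^*(u)\le b$ for some uniform $b$, it suffices to exhibit such $b$ with $EP(u,l)<c\,p^*$ for all $l\ge b$ and all $u$ close to $1$. Setting $J:=-\log(1-u)$ and parametrizing the tail by $l=k\,e^J$ for $k>0$, Lemma \ref{lemma3.1}'s formula yields as $u\to 1$ the asymptotics $I(u,l)\to kJ/(1+k)$ and $u\,p_A(l)+(1-u)p_B(l)=p_A(l)[u+(1-u)l]\sim(1+k)\,p_A(k\,e^J)$, so the envelope of $EP$ on the large-$l$ side is $P(kJ/(1+k))\cdot(1+k)\cdot L_A(k\,e^J)$, where $L_A$ is the L-function capturing the tail of $p_A$. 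The slow-growth hypothesis translates into an L-function comparison $P(I)\prec 1/L_A(e^I)$ within the totally ordered class $\mathbf{P}$; this comparison drives the envelope to $0$ as $J\to\infty$ uniformly over the relevant range of $k$. On the complementary moderate region (where $k$ is small enough that $kJ\to 0$), one has $I(u,l)\to 0$, so $EP(u,l)\le(c+o(1))\,p_A(l)$, which is strictly below $c\,p^*$ for $b>l_A$ chosen large enough that $p_A(b)<p^*$.

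A parallel argument at $u\to 0$ (swapping $A$ and $B$, replacing $l$ by $1/l$) gives $l^*(u)$ bounded away from $0$, so $L^*$ is constricted. The main obstacle is formulating the slow-growth condition so that both pieces of the decomposition work simultaneously: the naive bound $EP(u,l)\le P(J)\,p_B(l)$ is not uniform in $u$ because $P(J)\to\infty$, and the refinement along the curves $l=k\,e^J$ requires Hardy's L-function calculus (leveraging the total order on $\mathbf{P}$) to upgrade the pointwise-in-$k$ comparison to a uniform one. Once constrictedness is in hand, Proposition \ref{thm2.2} completes the proof.
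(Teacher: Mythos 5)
Your overall architecture is right and matches the paper's: reduce everything to constrictedness of $L^*$ (using u.h.c.\ and compact values to localize the problem at $u\to 1$ and $u\to 0$, and anchoring against the safe payoff $EP(u,1)=c\,p_A(1)$, which plays the same role as the paper's $\lim_k EP(u_k,l_A)=c\,p_A(l_A)$ from Lemma \ref{lemma3.5}), then invoke Proposition \ref{thm2.2}. Your asymptotics along $l=k e^{J}$ are also correct: $I\to kJ/(1+k)$ and the success probability is $p_A(l)\,[u+(1-u)l]\to(1+k)p_A(ke^J)$. But the route you take to constrictedness is genuinely different from the paper's, and the step you yourself flag as ``the main obstacle'' --- upgrading the pointwise-in-$k$ envelope bound to one that is uniform over the whole range $k\in(b(1-u),+\infty)$, including the intermediate regime $k\to 0$ with $kJ\to\infty$ and the regime $k\to\infty$ where the factor $(1+k)$ is unbounded --- is left unresolved, and it is exactly where the work lies. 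Your candidate condition $P(I)\prec 1/L_A(e^I)$ is plausible but unverified; note in particular that since $p_B(l)=l\,p_A(l)\to 0$ forces $L_A(l)=o(1/l)$, this condition does not even imply $P(I)\prec e^I$, so the large-$k$ regime needs the sharper bound through $p_B$ rather than a crude $P(I)\le P(J)$ estimate. As written, the proof of constrictedness is therefore incomplete.

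The paper avoids all of this with a softer argument (Proposition \ref{prop4.4}, proved in Appendix \ref{appendixD}). First, non-emptiness: if $\lim_{I\to+\infty}P(I)<+\infty$, then along any putative sequence $u_k\to 1$, $l_k\in l^*(u_k)$, $l_k\to+\infty$ (Lemma \ref{lemma3.270}), the payoff $P(I_k)[u_kp_A(l_k)+(1-u_k)p_B(l_k)]\to 0$ because $P$ is bounded and both success probabilities vanish, contradicting the guaranteed $c\,p_A(l_A)$; no uniform envelope over $l$ is ever needed because the argument is by contradiction along a sequence. Second, ``grows slowly enough'' is then formalized as downward closure under $\prec$ (Proposition \ref{prop3.11}): if $P\prec Q$ and $L^*_P$ were unbounded, one shows $I_k\to+\infty$ along the offending sequence and transfers the lower bound $\liminf EP(u_k,l_k)\ge c\,p_A(l_A)$ into $\liminf EQ(u_k,l_k)\ge a\,c\,p_A(l_A)$ with $a>1$, contradicting $EQ(u_k,l^*_Q(u_k))\to c\,p_A(l_A)$ from Lemma \ref{lemma3.5}. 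If you want to salvage your direct approach, you would need to actually carry out the uniform estimate in all three regimes of $k$; otherwise I would recommend restructuring along the paper's sequential-contradiction plus monotonicity lines, which buys you the theorem without ever having to name an explicit growth threshold.
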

Here the small success probability of going contrarian cannot be compensated because payment function $P(I)$ grows slowly. Detailed analysis can be found in the next section.

\section{Constrictedness of $L^*$}
\label{constrictedness}
As stated in the last section, researchers are incentivized to go contrarian when the belief assigns enough weight to one state and the payoff function grows fast enough. In this section we elaborate this idea.

We start our analysis with the researchers' decision problems when beliefs are getting super confident about state $A$ ($u_t\to 1$). 
A simple while important observation is that the information associated with any given project $l$ vanishes as $u\to 1$. The simplicity of this observation is revealed by the following fact. When the current belief assigns $90\%$ weight to state $A$, a project $l=10$ is sufficient to change the weight of $A$ to $47.4\%$ and making state $B$ more plausible. However, when the current belief assigns $99.9\%$ weight to state $A$, the same project only changes the weight of $A$ to $99\%$, which sounds not surprising at all. In fact, however large $l$ is, there is a $u$ close enough to $1$ so that the success of project $l$ only shift a tiny weight away from state $A$. As a result, in the process $u_t\to 1$, if researchers still want to generate some information, they not only need to go contrarian, they also need to go ``extremely" contrarian. If they want to stay safe by keeping the chosen project below some threshold $M$, eventually they generate zero information and only get paid by the base salary upon success. Since the payments upon success are just the base salary, researchers would choose the project with largest success probability, that is, project $l_A$. 
This statement can be written mathematically as the following lemma 
\begin{lemma}
	Let $u_k\to 1, l_k\in l^*(u_k)$ and $\{l_k\}\subset (0,M)$ with $M<+\infty$, then 
	\begin{eqnarray}
		l_k\to l_A \mbox{ and } \lim_{k\to +\infty} EP(u_k,l_k)=cp_A(l_A).
	\end{eqnarray}
is true for any payoff function $P$.
\end{lemma}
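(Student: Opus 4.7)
The plan is a subsequence argument: since $\{l_k\}$ lies in the bounded interval $(0,M)$, every subsequence has a further subsequence converging to some $l^*\in[0,M]$, and it suffices to show $l^*=l_A$ is forced. The benchmark is the fixed alternative project $l_A$, which is always available to each researcher and whose expected payoff $EP(u_k,l_A)$ can be computed explicitly in the limit $u_k\to 1$. Once $l_k\to l_A$ is established, the claim about $\lim EP(u_k,l_k)$ will follow immediately by continuity.

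The key computation is that $I(u_k,l_k)\to 0$ along any such convergent subsequence. Writing $q_k=u_k+(1-u_k)l_k$, boundedness of $\{l_k\}$ together with $u_k\to 1$ gives $q_k\to 1$. The first summand $(u_k/q_k)\log(1/q_k)$ then tends to $1\cdot 0=0$. The second summand equals $\frac{(1-u_k)l_k}{q_k}(\log l_k-\log q_k)$; when $l^*>0$ this is $0\cdot(\log l^*-0)$, while when $l^*=0$ the textbook limit $l\log l\to 0$ kills the apparent $0\cdot(-\infty)$ form. By continuity of $P$ with $P(0)=c$, this gives $P(I(u_k,l_k))\to c$. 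The success-probability factor $u_k p_A(l_k)+(1-u_k)p_B(l_k)$ converges to $p_A(l^*)$ by continuity of $p_A,p_B$ and Assumption 1.1(2) (which makes $p_A(0):=0$ the correct extension). Hence $EP(u_k,l_k)\to c\,p_A(l^*)$.

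Applying the same computation with $l_A$ in place of $l_k$ yields $EP(u_k,l_A)\to c\,p_A(l_A)>0$. Optimality of $l_k$ forces $EP(u_k,l_k)\geq EP(u_k,l_A)$ for every $k$, so passing to the limit gives $c\,p_A(l^*)\geq c\,p_A(l_A)$; since $l_A$ is the unique global maximizer of $p_A$ on $(0,+\infty)$ by Assumption 1.1(1), we conclude $l^*=l_A$, which simultaneously rules out the degenerate case $l^*=0$. Because every subsequential limit is $l_A$, the whole sequence satisfies $l_k\to l_A$, and taking limits a final time gives $EP(u_k,l_k)\to c\,p_A(l_A)$. The only delicate step is the corner case $l^*=0$: without care the KL summand $\frac{(1-u_k)l_k}{q_k}\log\frac{l_k}{q_k}$ presents a $0\cdot(-\infty)$ indeterminate form, and the argument hinges on the standard $l\log l\to 0$ bound to neutralize it.
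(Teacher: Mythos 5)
Your proof is correct, but it pins down the limit by a genuinely different mechanism than the paper. Both arguments start the same way: extract a convergent subsequence $l_{k_n}\to l^*\in[0,M]$, show $I(u_{k_n},l_{k_n})\to 0$ (the paper does this via monotonicity of $I(u,\cdot)$ and the bound $I\leq\max\{-\log u, I(u,M)\}$, you do it by direct computation with the $l\log l\to 0$ estimate for the corner case --- both are fine). The divergence is in the identification step. The paper rules out $l^*=0$ by comparison with the project $l=1$, and then invokes the \emph{first-order condition} at the interior optimum: passing to the limit in the FOC, the $P'$-term vanishes because of the factor $(1-u_{k_n})$, leaving $P(0)\,p_A'(l^*)=0$, so $p_A'(l^*)=0$ and unimodality gives $l^*=l_A$. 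You instead use only the \emph{zeroth-order} optimality inequality $EP(u_k,l_k)\geq EP(u_k,l_A)$, pass to the limit to get $c\,p_A(l^*)\geq c\,p_A(l_A)$, and conclude from $l_A$ being the unique global maximizer of $p_A$ (which also disposes of $l^*=0$ in the same stroke, since $p_A(0^+)=0$). Your route is more elementary: it needs no differentiability of $P$ or $p_A$ and avoids justifying the termwise limit of the FOC (where the paper implicitly uses boundedness of $P'$ near $I=0$). Under Assumption 1.1(1) the two characterizations of $l_A$ --- unique critical point versus unique global maximizer --- coincide, so nothing is lost; the FOC route would only matter in a setting where one wanted to detect local rather than global structure. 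The remaining steps (every subsequential limit equals $l_A$, hence the full sequence converges, hence $EP(u_k,l_k)\to c\,p_A(l_A)$) match the paper's conclusion.
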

Since staying safe would generate payoffs about $cp_A(l_A)$, any strictly higher payoffs can only be obtained by going contrarian. We could explicitly construct a payoff function $P_1(I)$ with higher payoffs, so going contrarian must be more attractive under $P_1(I)$.
\begin{lemma}
	There exists a payoff function $P_1(I)$ such that 
	\begin{eqnarray}
		\lim_{u\to 1}EP_1(u,l^*(u))\geq \lim_{u\to 1} EP_1(u,(1-u)^{-1})>c.
	\end{eqnarray}
Thus, under $P_1(I)$
\begin{eqnarray}
	\lim_{u\to 1} \min l^*(u)\to +\infty.
\end{eqnarray}
\end{lemma}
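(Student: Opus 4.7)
The first inequality $\lim_{u\to 1} EP_1(u,l^*(u)) \geq \lim_{u\to 1} EP_1(u,(1-u)^{-1})$ is immediate from the optimality of $l^*$, so the plan is to exhibit one $P_1 \in \mathbf{P}$ for which the specific ``calibrated contrarian'' project $l(u) := (1-u)^{-1}$ alone already produces an expected payoff strictly above the base salary $c$ in the limit $u\to 1$. Once such a $P_1$ is found, the second claim $\min l^*(u) \to +\infty$ drops out by contradiction with the previous lemma: if $l^*(u_k)$ remained bounded along some sequence $u_k\to 1$, that lemma would force $EP_1(u_k,l^*(u_k)) \to c\, p_A(l_A) \leq c$, contradicting $EP_1(u,l^*(u)) > c$ in the limit.

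Step one is to compute the asymptotics at $l = (1-u)^{-1}$. Because $(1-u)l = 1$, the posterior weights $u/(u+(1-u)l)$ and $(1-u)l/(u+(1-u)l)$ simplify to $u/(1+u)$ and $1/(1+u)$, each tending to $1/2$, while the factor $\log(l/(1+u)) = -\log((1-u)(1+u))$ diverges. This yields
\[
I\bigl(u,(1-u)^{-1}\bigr) \;\sim\; \tfrac{1}{2}\log\tfrac{1}{1-u}, \qquad u \to 1.
\]
Using the consistency relation $p_B(l) = l\,p_A(l)$ (see the footnote to Assumption \ref{assumption1.1}), the success probability collapses neatly to
\[
u p_A(l) + (1-u) p_B(l) \;=\; (1+u)\, p_A\!\bigl((1-u)^{-1}\bigr) \;\sim\; 2\,L_1\!\bigl(\tfrac{1}{1-u}\bigr).
\]
Writing $t := 1/(1-u)$, the problem reduces to selecting an L-function $P_1$ so that $P_1(\tfrac12\log t)\cdot L_1(t)$ stays bounded away from $0$ as $t \to +\infty$.

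Step two is the construction of $P_1$. Since $L_1$ is an L-function with $L_1(t)\to 0$, its reciprocal $1/L_1(t)$ is an L-function growing to infinity, and composition with the L-function substitution $t = e^{2I}$ shows that $1/L_1(e^{2I})$ is an L-function of $I$. I would therefore take $P_1$ to be the L-function asymptotic to $(c+1)/\bigl(2\, L_1(e^{2I})\bigr)$ for large $I$, smoothly glued on a compact initial interval so that $P_1 \in \mathbf{P}$, is continuously differentiable, and satisfies $P_1(0) = c$. Plugging the two asymptotics above into $EP_1$ gives $\lim_{u\to 1} EP_1(u,(1-u)^{-1}) = c+1 > c$, delivering the required chain of inequalities, and then the contradiction argument above forces $\min l^*(u) \to +\infty$.

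The main obstacle is verifying that the $P_1$ constructed above really lies in $\mathbf{P}$, i.e.\ that $(c+1)/\bigl(2\,L_1(e^{2I})\bigr)$ is asymptotic to an honest Hardy L-function of $I$ rather than merely being some ad hoc function dominating the desired growth. This hinges on the closure properties of the L-function class under reciprocal, composition, and the exponential substitution $t = e^{2I}$, as reviewed in Appendix \ref{appendix1}; once these closure facts are in hand, the asymptotic equivalence places $P_1$ inside $\mathbf{P}$, and the remainder of the argument is routine.
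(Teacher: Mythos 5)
Your overall strategy is the same as the paper's proof of Lemma \ref{lemma3.7}: reduce everything to exhibiting one $P_1$ with $\lim_{u\to 1}EP_1(u,(1-u)^{-1})>c$, compute the asymptotics of $I(u,(1-u)^{-1})$ and of the success probability via the consistency relation $p_B(l)=l\,p_A(l)$, and take $P_1$ asymptotic to a rescaled reciprocal of $L_1$ composed with an exponential substitution; the first inequality and the concluding contradiction via Lemma \ref{lemma3.5} are also handled exactly as the paper does. However, there is a genuine gap in the construction step. The exact quantity fed into $P_1$ is
\begin{eqnarray}
	I\bigl(u,(1-u)^{-1}\bigr)=-\frac{\log(1-u)}{1+u}-\log(1+u)=\tfrac12\log t-\log 2+o(1),\qquad t=\tfrac{1}{1-u},\notag
\end{eqnarray}
and you are not entitled to replace this by $\tfrac12\log t$ inside $P_1$: payoff functions in $\mathbf{P}$ may grow super-exponentially, so asymptotic equivalence of arguments does not pass through $P_1$. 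Concretely, with your choice $P_1(I)\sim (c+1)/\bigl(2L_1(e^{2I})\bigr)$ one has $e^{2I(u,(1-u)^{-1})}=\tfrac{t}{4}(1+o(1))$, hence
\begin{eqnarray}
	EP_1\bigl(u,(1-u)^{-1}\bigr)\sim (c+1)\,\frac{L_1(t)}{L_1\bigl(\tfrac{t}{4}(1+o(1))\bigr)},\notag
\end{eqnarray}
and since $L_1$ is eventually positive, decreasing and tends to $0$, this ratio is eventually at most $1$ and can tend to $0$ (take $p_A(l)\sim e^{-l}$ in the tail, which the assumptions permit). Your claimed limit $c+1$ then fails — the limit can even be $0<c$ — and the whole chain of inequalities collapses for such $p_A$.

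The repair is precisely what the paper builds into its formula $P_1(I)=d\,[p_A(4e^{2I})]^{-1}$: absorb the shift $-\log 2$ by putting the factor $4=e^{2\log 2}$ inside the composition, i.e.\ take $P_1(I)$ asymptotic to a constant multiple of $[L_1(4e^{2I})]^{-1}$. Then $4e^{2I(u,(1-u)^{-1})}=t(1+o(1))$, and the residual error in the exponent makes this argument slightly \emph{larger} than $t$, so the eventual monotone decrease of $p_A$ pushes the discrepancy in the favorable direction and the limit is bounded below by a constant strictly above $c$. Your remaining points — closure of the L-function class under reciprocals and the substitution $t=e^{2I}$, the gluing near $I=0$ to enforce $P_1(0)=c$, and the final contradiction with Lemma \ref{lemma3.5} — are fine and coincide with the paper's treatment.
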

It is not surprising that if going contrarian is optimal under one payoff function, it must also be optimal under another payoff function which grows faster. Therefore, we have the following proposition:
\begin{proposition}
	\label{prop4.3}
	There exists a non-empty set of payoff functions $\overline{\mathbf{P}}_r$ such that 
	\begin{eqnarray}
		\lim_{u\to 1} \min l^*(u)\to +\infty
	\end{eqnarray}
under any $P$ in $\overline{\mathbf{P}}_r$. Furthermore, if $P$ is in $\overline{\mathbf{P}}_r$ and $Q\succ P$, then $Q$ is in $\overline{\mathbf{P}}_r$ as well.
\end{proposition}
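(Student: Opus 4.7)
My plan is to establish the proposition in two pieces: nonemptiness, which is immediate from the preceding lemma, and the upward closure under $\succ$, which I will show by contradiction using the ``bounded optimizer implies $l \to l_A$'' lemma together with the strict asymptotic ratio guaranteed by $Q \succ P$.

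Nonemptiness follows because the preceding lemma exhibits a payoff function $P_1$ with $\lim_{u \to 1} \min l^*(u) = +\infty$ under $P_1$, placing $P_1$ in $\overline{\mathbf{P}}_r$. For upward closure, suppose toward contradiction that $P \in \overline{\mathbf{P}}_r$, $Q \succ P$, and $Q \notin \overline{\mathbf{P}}_r$. Since $Q \succ P$, one has $\lim_{I \to +\infty} Q(I)/P(I) > 1$; fix $\beta$ with $1 < \beta < \lim_{I \to +\infty} Q(I)/P(I)$, so that $Q(I) \ge \beta P(I)$ for all sufficiently large $I$. Since $Q \notin \overline{\mathbf{P}}_r$, pick $M < \infty$ and a sequence $u_k \to 1$ with some $l_k^Q \in l^*(u_k)$ (under $Q$) satisfying $l_k^Q \le M$. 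The preceding lemma, applied to $Q$, yields $l_k^Q \to l_A$ and the $Q$-optimal value $EQ(u_k, l_k^Q) \to c\,p_A(l_A)$.

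The heart of the argument is to compare this value against what $Q$ earns at the feasible choice $\tilde l(u_k) := \min l^*(u_k)$ (under $P$), which tends to $+\infty$ by $P \in \overline{\mathbf{P}}_r$. A preliminary step is to show that, after passing to a subsequence, $I(u_k, \tilde l(u_k)) \to +\infty$. Indeed, if $(1 - u_k)\tilde l(u_k) \to 0$ along some subsequence, the KL formula gives $I(u_k, \tilde l(u_k)) \to 0$, while the success probability $u_k p_A(\tilde l(u_k)) + (1-u_k) p_B(\tilde l(u_k))$ also tends to $0$ by Assumption~\ref{assumption1.1}(2). Hence $EP(u_k, \tilde l(u_k)) \to 0$, contradicting the optimality of $\tilde l(u_k)$ under $P$ against the safe benchmark $EP(u_k, l_A) \to c\,p_A(l_A) > 0$. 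After passing to a subsequence, $(1-u_k)\tilde l(u_k)$ is bounded away from $0$, which forces $I(u_k, \tilde l(u_k)) \to +\infty$. Along this subsequence, $Q(I(u_k, \tilde l(u_k))) \ge \beta P(I(u_k, \tilde l(u_k)))$ for large $k$, so multiplying by the common success probability $u_k p_A(\tilde l(u_k)) + (1-u_k) p_B(\tilde l(u_k))$ gives $EQ(u_k, \tilde l(u_k)) \ge \beta EP(u_k, \tilde l(u_k)) \ge \beta EP(u_k, l_A) \to \beta c\,p_A(l_A)$, where the middle inequality uses optimality of $\tilde l(u_k)$ under $P$. But $\tilde l(u_k)$ is a feasible choice under $Q$, so $EQ(u_k, \tilde l(u_k)) \le EQ(u_k, l_k^Q) \to c\,p_A(l_A)$. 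Taking limits gives $\beta c\,p_A(l_A) \le c\,p_A(l_A)$, contradicting $\beta > 1$ and $p_A(l_A) > 0$. Hence $Q \in \overline{\mathbf{P}}_r$.

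The main obstacle is the intermediate claim that $I(u_k, \tilde l(u_k)) \to +\infty$. It is tempting to think $\tilde l(u_k) \to +\infty$ alone suffices, but a slowly-growing contrarian sequence (e.g., $\tilde l \ll (1-u)^{-1}$) leaves the posterior close to the prior and drives $I$ to $0$. The argument above circumvents this by using optimality under $P$ to exclude such slow growth: any slowly-growing contrarian project is doubly punished, by vanishing information and vanishing success probability, and cannot beat the safe benchmark $c\,p_A(l_A)$ available at $l_A$. Once slow growth is excluded, the $\beta$-boost from $Q \succ P$ does all the remaining work.
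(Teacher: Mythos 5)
Your overall strategy is the same as the paper's: nonemptiness via the explicitly constructed $P_1$, and upward closure by contradiction, comparing the $Q$-value of the bounded optimizer $l_k^Q$ (which tends to $cp_A(l_A)$ by the "bounded optimizer" lemma) against the $Q$-value of the $P$-optimal contrarian project, boosted by a factor $\beta>1$ from $Q\succ P$. The final comparison, the use of optimality of $\tilde l(u_k)$ under $P$ and of $l_k^Q$ under $Q$, and the resulting contradiction $\beta cp_A(l_A)\le cp_A(l_A)$ all match the paper's Proposition \ref{prop3.10}.

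There is, however, one step that fails as written: the claim that $(1-u_k)\tilde l(u_k)\to 0$ implies $I(u_k,\tilde l(u_k))\to 0$. Writing $x_k=(1-u_k)\tilde l(u_k)$ and using $I(u,l)=\frac{(1-u)l\log l}{u+(1-u)l}-\log(u+(1-u)l)$, in the regime $u_k\to 1$, $x_k\to 0$ one gets $I(u_k,\tilde l(u_k))\approx x_k\log\tilde l(u_k)=x_k\log x_k+x_k\log\frac{1}{1-u_k}$, and the second term can tend to anything (including $+\infty$, e.g.\ $1-u_k=e^{-1/x_k^2}$) even though $x_k\to 0$. So in your "Case 1" you cannot conclude that $P(I_k)$ stays bounded, hence cannot conclude $EP\to 0$, and the exclusion of such subsequences does not go through; consequently the reduction to "$(1-u_k)\tilde l(u_k)$ bounded away from $0$" is not justified. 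The needed conclusion ($I(u_k,\tilde l(u_k))\to+\infty$) is true, but the dichotomy should be run on $I_k$ itself, as the paper does: if $I_k=I(u_k,\tilde l(u_k))$ had a subsequence converging to a finite limit, then $P(I_k)$ would be bounded along it while the success probability $u_kp_A(\tilde l(u_k))+(1-u_k)p_B(\tilde l(u_k))$ vanishes because $\tilde l(u_k)\to+\infty$, giving $EP(u_k,\tilde l(u_k))\to 0$ and contradicting $EP(u_k,\tilde l(u_k))\ge EP(u_k,l_A)\to cp_A(l_A)>0$. With that substitution your argument is complete.
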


On the other hand, if $P(I)$ grows so slow that $\lim_{I\to +\infty} P(I)$ is a finite number, then the risk of going extremely contrarian (vanishing success probability) can never be compensated by a sufficiently large payoff upon success. Thus, $L^*$ must be constricted under such a slow growing payoff function. Furthermore, it is not surprising that if staying safe is optimal under one payoff function, it must also be optimal under another payoff function which grows slower. So we have the following proposition
\begin{proposition}
	\label{prop4.4}
	There exists a non-empty set of payoff functions $\underline{\mathbf{P}}$ such that 
	$L^*$ is constricted 
	under any $P$ in $\underline{\mathbf{P}}$. Furthermore, if $P$ is in $\underline{\mathbf{P}}$ and $Q\prec P$, then $Q$ is in $\underline{\mathbf{P}}$ as well.
\end{proposition}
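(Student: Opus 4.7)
The plan is to define $\underline{\mathbf{P}}$ as the set of all $P \in \mathbf{P}$ for which $L^*_P$ is already constricted, and to verify both non-emptiness and downward-closure under $\prec$ directly.

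For non-emptiness I would exhibit any bounded $P \in \mathbf{P}$ with $\lim_{I \to +\infty} P(I) = M < +\infty$ (for instance $P(I) = c + I/(1+I)$). Since $P \leq M$, we have $EP(u,l) \leq M \max(p_A(l), p_B(l))$, and Assumption \ref{assumption1.1} makes the right-hand side vanish uniformly in $u$ as $l \to 0^+$ or $l \to +\infty$. Meanwhile the safe choice $l_A$ when $u \geq 1/2$ and $l_B$ otherwise guarantees an expected payoff bounded below by $\tfrac{c}{2}\min(p_A(l_A), p_B(l_B)) > 0$. Comparing the two estimates forces every selection $l^*_P(u)$ into a fixed compact subset of $(0,+\infty)$, so $L^*_P$ is constricted.

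For downward-closure, fix $P \in \underline{\mathbf{P}}$ with $L^*_P \subset [a,b]$ and take $Q \prec P$ with $r := \lim_{I \to +\infty} Q(I)/P(I) < 1$. I would prove $\sup L^*_Q < +\infty$ by contradiction, supposing $l^*_Q(u_n) \to +\infty$ and extracting $u_n \to u^* \in [0,1]$. If $u^* < 1$, the quantity $I(u_n, l^*_Q(u_n))$ stays bounded (directly if $u^* > 0$, via the computation $I \to 0$ if $u^* = 0$ and $l \to +\infty$), so $Q(I(u_n, l^*_Q(u_n)))$ is bounded while $p_A, p_B$ vanish at $l^*_Q(u_n)$, giving $EP^Q(u_n, l^*_Q(u_n)) \to 0$ and contradicting the safe-choice lower bound. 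In the delicate case $u^* = 1$, the lemma about the constricted case forces $l^*_P(u_n) \to l_A$ and $EP^P(u_n, l^*_P(u_n)) \to c\,p_A(l_A)$, while $EP^Q(u_n, l_A) \to c\,p_A(l_A)$ since $I(u_n, l_A) \to 0$ and $Q(0)=c$. Optimality of $l^*_Q$ under $Q$ then gives $\liminf EP^Q(u_n, l^*_Q(u_n)) \geq c\,p_A(l_A)$, and sub-optimality of $l^*_Q(u_n)$ under $P$ gives $\limsup EP^P(u_n, l^*_Q(u_n)) \leq c\,p_A(l_A)$. Dividing,
\[
\liminf_n \frac{Q(I(u_n, l^*_Q(u_n)))}{P(I(u_n, l^*_Q(u_n)))} \;\geq\; 1.
\]
A short case analysis on $(1-u_n)\,l^*_Q(u_n)$ rules out $I(u_n, l^*_Q(u_n))$ staying bounded (the only scenario is $(1-u_n)l^*_Q(u_n) \to 0$, but that drives $EP^Q \to 0$), so the ratio must tend to $r < 1$, a contradiction. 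The bound $\inf L^*_Q > 0$ follows by the symmetric argument with $l \to 0^+$, $u \to 0$, and $l_B$ in place of $l \to +\infty$, $u \to 1$, and $l_A$.

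The main obstacle is the case $u^* = 1$ with unbounded $Q \prec P$, where both $Q(I(u_n, l^*_Q(u_n)))$ and $P(I(u_n, l^*_Q(u_n)))$ can diverge and naive uniform comparisons between $Q$ and $P$ fail. The crucial leverage is that constrictedness of $L^*_P$ pins the optimal $P$-payoff near $u = 1$ sharply at $c\,p_A(l_A)$, leaving $Q$ no slack at high-information projects where, by hypothesis, $Q/P$ is strictly pushed down to $r < 1$.
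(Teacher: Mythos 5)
Your proposal is correct and follows essentially the same route as the paper: non-emptiness is witnessed by bounded payoff functions (vanishing success probabilities cannot be compensated, while the safe choice $l_A$ or $l_B$ guarantees a positive floor), and downward-closure rests on the fact that constrictedness of $L^*_P$ pins the limiting optimal payoff at $c\,p_A(l_A)$, that $I(u_n,l^*_Q(u_n))\to+\infty$ along any diverging optimal sequence, and that $\lim_{I\to\infty}Q(I)/P(I)<1$. Your ratio inequality $\liminf Q/P\geq 1$ is just a rephrasing of the paper's multiplicative-gap contradiction, so no further comment is needed.
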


Omitted details can be found in appendix \ref{appendixD}. Similar conclusions and analysis holds when public beliefs are getting extremely confident about state $B$ ($u\to 0$). 

\section{Conclusion}
In this paper we study the long run consequence of p-hacking under the framework of mis-specified learning. We find that over-incentivizing information provision could harm learning in the presence of p-hacking. On the other hand, p-hacking at a low intensity won't affect long run learning if the incentive of information provision is proper.

There are several potential extensions of this model.
In the analysis in the main text, we assume the function of success probabilities is fixed and vary the payoff functions. It is reasonable to think that some areas may have different $p_A(l)$ from other areas. For example, it might be true that psychology, economics and clinical studies have $p_A(l)$s with thick tails
\footnote{
We say $p_A^1(l)$ has a thicker tail than $p_A^2(l)$ if $$\lim_{l\to 0\mbox{ or }+\infty} \frac{p_A^1(l)}{p_A^2(l)}>1.$$
The word ``tail" may be a poor choice as $p_A(l)$ is not a distribution.
}
than physics, chemistry and astronomy. The wide heterogeneity of people makes it is relative easier to find contrarian evidences. In our framework, a $p_A(l)$ with thicker tails makes p-hacking a more serious problem. As the success probability of finding extreme contrarian evidences is higher, people are more likely to go contrarian. This may be related to the fact that psychologists and economists care more about p-hacking than physicists and chemists.

Besides, in the case of theorem \ref{thm2.7}, extreme contrarian evidences regularly arrive when public belief has assigned enough weight to one state. This is rarely observed in scientific research. But it is occasionally observed on social media-public opinions strongly favor one fact, then suddenly an extreme contrarian evidence overturns everything. Sometimes this radical opinion shifts could go back and forth several times. Consider the easiness of a social media blogger to p-hack (make the evidence looks stronger than it actually is) and the high reward of becoming an influencer on social media, theorem \ref{thm2.7} may be a channel to explain the radical shifts of public opinions on social media.
\newpage
\appendix
\section{Hardy's L-function}
\label{appendix1}
In \cite{Hardy1910}, Hardy introduced L-functions.
	In short, L-functions 
	are real-valued one-variable function obtained via finitely many operations of $+,-,\times,\div,\sqrt[n]{}$ and finitely many applications of operators $\log()$ and $e^{()}$.
	To be specific, L-functions could be constructed in the following recursive way:
	\begin{enumerate}
		\item order-$0$ $L$-function: algebraic functions that variable $x$ and constants are connected via finitely many operations of $+,-,\times,\div,\sqrt[n]{}$ 
		\item order-$1$ $L$-function: $\log$ and exponential of order $0$ L-function; and those elementary order $1$ $L$-functions, together with order $0$ $L$ functions, connected by finitely many algebraic operations.
	\end{enumerate}  
	Repeat this procedure one can obtain order $n$ L-functions for all natural number $n$.
	
For example, $L(x)=x^{x^x}$ could be rewritten as 
\begin{eqnarray}
	e^{(\log x) e^{x \log x}}.
\end{eqnarray}
Here $x\log x$ is an order-$1$ L-function; applying operator $e^{()}$ to it, we obtain an order-$2$ L-function $e^{x \log x}$; multiplying this order-$2$ L-function to another order-$1$ L-function $\log x$ we obtain a different order-$2$ L-function; we apply operator $e^{()}$ again to get order-$3$ L-function $x^{x^x}$.

Hardy proved that any $L$-functions are eventually continuous, monotonic, of constant sign, and has a limit as $x$ approaches $+\infty$. Of course, here this limit is allowed to be $\pm \infty$. L-functions have many other desirable properties as well. First, it provides a rich enough way to measure a function's growth rate.
\begin{proposition}
	Using $l_n(x)$ to denote the $n$-th iteration of logarithm, and $e_n(x)$ to denote the $n$-th iteration of exponential. Then there exist $L$-functions $f_n$ grows slower than $l_n(x)$, that is,
	\begin{eqnarray}
		\lim_{x\to +\infty}\frac{f_n(x)}{l_n(x)}=0, \lim_{x\to +\infty}f_n(x)=+\infty.
	\end{eqnarray}
	There also exist $L$ functions $g_n(x)$ grows faster than $e_n(x)$, that is,
	\begin{eqnarray}
		\lim_{x\to +\infty} \frac{e_n(x)}{g_n(x)}=0.
	\end{eqnarray}
\end{proposition}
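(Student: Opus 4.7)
The plan is purely constructive: for each $n$, I would exhibit an explicit L-function witness on either side of the iterated logarithm/exponential, so the proposition reduces to checking a membership condition and two elementary limits. For the slow-growing side, I would take $f_n(x) := l_{n+1}(x) = \log(l_n(x))$. This is an L-function of order $n+1$ because the recursive construction in Appendix \ref{appendix1} permits applying the operator $\log(\cdot)$ to any existing L-function, and $l_n$ is already an L-function of order $n$. A one-line induction on $n$ shows $l_n(x) \to +\infty$ as $x \to +\infty$ (the base $l_1(x) = \log x$ is immediate, and the inductive step uses that $\log$ of an unbounded increasing quantity is itself unbounded for $x$ large enough), hence $f_n(x) = \log l_n(x) \to +\infty$. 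For the ratio, setting $y = l_n(x)$,
\begin{eqnarray}
\frac{f_n(x)}{l_n(x)} = \frac{\log y}{y} \longrightarrow 0 \quad \text{as } y\to +\infty,
\end{eqnarray}
which is the elementary fact $\log y / y \to 0$.

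The fast-growing side is treated symmetrically. I would take $g_n(x) := e_{n+1}(x) = e^{e_n(x)}$, which is an L-function of order $n+1$ by one additional application of the exponential operator to the order-$n$ L-function $e_n(x)$. Setting $y = e_n(x)$,
\begin{eqnarray}
\frac{e_n(x)}{g_n(x)} = \frac{e_n(x)}{e^{e_n(x)}} = y\,e^{-y} \longrightarrow 0 \quad \text{as } y \to +\infty,
\end{eqnarray}
by the elementary fact $y e^{-y} \to 0$. The divergence $e_n(x) \to +\infty$ is again a trivial induction, so $y \to +\infty$ is justified.

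There is no serious obstacle here: the only things to verify are that the exhibited witnesses fall under the L-function construction rules and that the iterated logarithms/exponentials themselves diverge to $+\infty$, both of which are immediate from Appendix \ref{appendix1}. The substantive content of the proposition is simply that the L-function hierarchy is rich enough to contain a function strictly below every iterated logarithm and strictly above every iterated exponential, and $l_{n+1}$ and $e_{n+1}$ are the canonical such witnesses — the statement is essentially a sanity check that one cannot exhaust the L-function growth scale by any finite tower of $\log$s or $\exp$s.
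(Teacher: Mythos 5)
Your proposal is correct: the paper states this proposition without proof (it is background material from Hardy's theory of L-functions), and your explicit witnesses $f_n = l_{n+1}$ and $g_n = e_{n+1}$, together with the elementary limits $\log y / y \to 0$ and $y e^{-y} \to 0$, are exactly the canonical construction the statement is implicitly relying on. Nothing is missing; the membership of $l_{n+1}$ and $e_{n+1}$ in the L-function hierarchy follows directly from the recursive construction in the appendix, as you note.
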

From the construction of L-functions, the reciprocal of a L-function is still a L-function. So L-functions also provides a rich way to measure vanishing rate of functions.

Furthermore, the quotient of two L-functions is still a L-function, so it must have a limit as $x\to +\infty$.
Thus,
we could make the set of payoff functions $\mathbf{P}$ totally ordered according to its growth rate.
\begin{proposition}
	For any two payoff functions $P, Q$ in $\mathbf{P}$, define $P\preceq Q$ as 
	\begin{eqnarray}
		\lim_{I\to +\infty}\frac{P(I)}{Q(I)}\leq 1.
	\end{eqnarray}
	Then $\preceq$ is a total order on $\mathbf{P}$.
\end{proposition}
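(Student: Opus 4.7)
The plan is to reduce everything to the key property of L-functions noted just above the proposition: the quotient of two L-functions is again an L-function, and by Hardy's theorem every L-function has a (possibly infinite) limit as $I\to +\infty$. Combined with the defining property of $\mathbf{P}$ (each $P\in \mathbf{P}$ is asymptotic to some L-function $L_P$), this will show that $P(I)/Q(I)$ has a limit in $[0,+\infty]$ for every pair $P,Q\in \mathbf{P}$, which is the only analytic fact we actually need.

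First I would formally set $\ell(P,Q):=\lim_{I\to+\infty}P(I)/Q(I)\in[0,+\infty]$ and verify its existence. Writing $P\sim L_P$ and $Q\sim L_Q$, we get $P/Q\sim L_P/L_Q$; the right-hand side is an L-function by closure of the class under division, so it has an extended-real limit by Hardy's theorem, and asymptotic equivalence transfers this limit to $P/Q$. A minor point to record is that $L_Q(I)\neq 0$ for all large $I$, since L-functions are eventually of constant sign and payoff functions are positive, so the quotient is well defined in the tail.

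Next I would verify the four defining properties of a total order in turn. Reflexivity is immediate from $\ell(P,P)=1$. Transitivity uses $P/R=(P/Q)\cdot(Q/R)$ and multiplicativity of limits: if both factor limits lie in $[0,1]$, so does the product, so $P\preceq Q$ and $Q\preceq R$ yield $P\preceq R$ (no indeterminate form $0\cdot\infty$ can arise because both factors are bounded above by $1$). Antisymmetry modulo $\sim$ follows from the identity $\ell(P,Q)\cdot\ell(Q,P)=1$: if both are $\leq 1$, both must equal $1$, i.e.\ $P\sim Q$. Totality uses the same identity together with a case split on whether $\ell(P,Q)$ is finite: if $\ell(P,Q)>1$ is finite, then $\ell(Q,P)=1/\ell(P,Q)<1$; if $\ell(P,Q)=+\infty$, then $\ell(Q,P)=0<1$. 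Either way $Q\preceq P$, so every pair is comparable.

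I do not foresee a deep obstacle: the heavy lifting has been done by Hardy in proving that L-functions are eventually monotone with extended-real limits and closed under quotients. The only place that needs care is the bookkeeping in the identity $\ell(P,Q)\ell(Q,P)=1$, where the cases $\ell(P,Q)\in\{0,+\infty\}$ should be treated separately to avoid $0\cdot\infty$. I would also remark that $\preceq$ is strictly speaking a total \emph{preorder} on $\mathbf{P}$ and becomes a genuine antisymmetric order only after passing to equivalence classes under $\sim$, but the argument above establishes both versions simultaneously.
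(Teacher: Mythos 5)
Your proposal is correct and rests on exactly the observation the paper itself uses: the quotient of two L-functions is again an L-function, hence has an extended-real limit by Hardy's theorem, which transfers to $P/Q$ by asymptotic equivalence; the remaining order axioms are routine. The paper gives no further detail, so your write-up (including the careful handling of the $0$ and $+\infty$ cases and the remark that $\preceq$ is strictly a total preorder) is simply a fleshed-out version of the same argument.
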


\section{Almost surely convergence under constricted $L^*$}
\label{appendixB}
The intuition of the proof goes as following: 
Using Doob's decomposition, supermartingale $\lambda_t^\varepsilon$ is the sum of a martingale $M_t$ and a predictable process (drift) $A_t$
\begin{eqnarray}
	A_t=\sum_{m=1}^t E[\lambda_m-\lambda_{m-1}|\mathcal{F}_{m-1}].
\end{eqnarray} 
\footnote{Here the filtration is generated by the supermartingale $\lambda_t^\varepsilon$. That is, $\mathcal{F}_{t}=\sigma(\lambda^\varepsilon_0,\dots,\lambda^\varepsilon_t)$. It is also direct to verify that this filtration $\mathcal{F}_t$ is also generated by the associated martingale $M_t$, that is,
	$\mathcal{F}_{t}=\sigma(M_0,\dots,M_t)$ holds as well.}
Martingale central limit theorem (corollary 1 in \cite{Ouchti2005}) implies that the distribution of $\lambda_t^\varepsilon$ is eventually roughly a normal distribution centered at $E[A_t]$ with variance 
\begin{eqnarray}
	\sum_{m=1}^t E[(M_m-M_{m-1})^2|\mathcal{F}_{m-1}].
\end{eqnarray}
We could compute this variance and find the standard deviation of $\lambda_t^\varepsilon$ is at the order of $\sqrt{t}$. As shown in lemma \ref{lemma2.1}, the drift satisfying that 
\begin{eqnarray}
	E[\lambda_t^\varepsilon]=E[A_t]\leq -\delta t.
\end{eqnarray}
As the drift is of order $t$ and dominates the standard deviation, $\lambda_t^\varepsilon$ converges to $-\infty$ almost surely.

\subsection{Convergence in probability}
In this subsection, we prove a lemma that supermartingale $\lambda_t^{\varepsilon}$ converges to $-\infty$ in probability. We need this lemma to show that the variance of $\lambda_t^{\varepsilon}$ diverge to $+\infty$ a.s., which is a required condition of martingale CLT.

\begin{lemma}
	\label{lemmaB1}
	Under the condition of lemma \ref{lemma2.1}, supermartingale $\lambda_t^\varepsilon$ converges to $-\infty$ in probability.
\end{lemma}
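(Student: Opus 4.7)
The plan is to use the elementary Doob decomposition together with Chebyshev's inequality, which is enough to get convergence in probability (the full martingale CLT argument sketched above can then be reserved for the almost sure statement). First I would argue that under the constrictedness hypothesis, the one-step increments
\begin{eqnarray}
X_{s+1}=\lambda_{s+1}^\varepsilon-\lambda_s^\varepsilon \in \Big\{\log l^*(u_s^\varepsilon),\ \log \tfrac{1-pp_B(l^*(u_s^\varepsilon))}{1-pp_A(l^*(u_s^\varepsilon))}\Big\}
\end{eqnarray}
are uniformly bounded in absolute value by some constant $K$. This uses only that $L^*$ is contained in a compact subset of $(0,+\infty)$, on which $\log l^*$ is bounded, together with the fact that $p<1$ and $p_A,p_B$ are continuous and bounded below $1/p$ on that compact set, so the log-ratio of failure probabilities is also bounded.

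Next I would apply Doob's decomposition $\lambda_t^\varepsilon=\lambda_0+M_t+A_t$, where $M_t=\sum_{s=1}^t\big(X_s-E[X_s\mid\mathcal{F}_{s-1}]\big)$ is a martingale and $A_t=\sum_{s=1}^t E[X_s\mid\mathcal{F}_{s-1}]$ is the predictable drift. Lemma \ref{lemma2.1} yields $E[X_s\mid\mathcal{F}_{s-1}]\leq-\delta$ pathwise, so $A_t\leq-\delta t$ pathwise. The martingale differences $\tilde X_s:=X_s-E[X_s\mid\mathcal{F}_{s-1}]$ are bounded by $2K$ and orthogonal, hence
\begin{eqnarray}
E[M_t^2]=\sum_{s=1}^t E[\tilde X_s^2]\leq 4K^2 t.
\end{eqnarray}

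Finally, fix any $M>0$ and take $t$ so large that $\delta t/2>M+\lambda_0$. Then
\begin{eqnarray}
\{\lambda_t^\varepsilon>-M\}\subseteq\{M_t+A_t>-M-\lambda_0\}\subseteq\{M_t>\delta t-M-\lambda_0\}\subseteq\{M_t>\delta t/2\},
\end{eqnarray}
and Chebyshev's inequality gives
\begin{eqnarray}
P(\lambda_t^\varepsilon>-M)\leq P(M_t>\delta t/2)\leq \frac{4E[M_t^2]}{\delta^2 t^2}\leq \frac{16K^2}{\delta^2 t}\xrightarrow{t\to\infty} 0,
\end{eqnarray}
which is exactly convergence in probability to $-\infty$.

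The main obstacle is the uniform boundedness claim for the increments. The boundedness of $\log l^*$ follows immediately from constrictedness of $L^*$, but one also has to verify that $\log\frac{1-pp_B(l^*)}{1-pp_A(l^*)}$ stays uniformly bounded on the compact closure of $L^*$; this uses $p<1$ together with Assumption \ref{assumption1.1} to keep $pp_A(l^*),pp_B(l^*)$ bounded strictly below $1$ and away from degeneracy. Once this is in hand, the rest is a routine linear-drift-dominates-$\sqrt{t}$-spread argument and requires no further probabilistic machinery beyond Chebyshev.
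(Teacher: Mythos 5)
Your proof is correct, and both of its inputs are exactly the ones the paper relies on: the increments $\lambda_{t+1}^\varepsilon-\lambda_t^\varepsilon$ are uniformly bounded because $L^*$ lies in a compact subset of $(0,+\infty)$ (and $p<1$ keeps $1-pp_A(l^*),\,1-pp_B(l^*)$ in a compact subset of $(0,1]$), and Lemma \ref{lemma2.1} gives the uniform drift bound $E[X_s\mid\mathcal{F}_{s-1}]\leq-\delta$. Where you diverge from the paper is in the concentration step. The paper does not pass through the Doob decomposition here; instead it tilts the process, defining $B_t=\lambda_t^\varepsilon+\frac{\delta}{2}t$, checks that $B_t$ is still a bounded-increment supermartingale, and applies Azuma's inequality to get $P(\lambda_t^\varepsilon\geq\lambda_0^\varepsilon-\frac{\delta t}{4})\leq e^{-ct}$ for some $c>0$. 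You instead split off the martingale part $M_t$, bound $E[M_t^2]\leq 4K^2t$ by orthogonality of the differences, and use Chebyshev to get a $O(1/t)$ bound on the same event. Both are valid instances of the ``linear drift beats $\sqrt{t}$ spread'' argument, and both deliver convergence in probability. The trade-off is that your Chebyshev bound is not summable in $t$, so it cannot be upgraded to almost sure convergence by Borel--Cantelli along the full sequence, whereas the paper's exponential Azuma bound is summable and would in fact already yield $\lambda_t^\varepsilon\to-\infty$ a.s.\ directly (the paper nevertheless reserves the a.s.\ statement for the separate martingale-CLT argument). Since the lemma only asks for convergence in probability, your more elementary route is entirely adequate; the only cosmetic slip is the phrase ``bounded below $1/p$,'' where you mean that $p_A,p_B\leq 1$ and $p<1$ force $pp_A(l^*),pp_B(l^*)$ to stay bounded strictly below $1$.
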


We need the following Azuma's inequality to prove lemma \ref{lemmaB1}.
\begin{lemma}(Azuma's inequality)
	If $X_t$ is a supermartingale satisfying that 
	\begin{eqnarray}
		|X_{t}-X_{t-1}|\leq c_t,\, a.s.\, \forall t\in \{1,2,\dots\}
	\end{eqnarray}
	then for all $\epsilon>0$, we have 
	\begin{eqnarray}
		P(X_t-X_0>\epsilon)\leq e^{-\frac{\epsilon^2}{2\sum_{s=1}^t c^2_s}}
	\end{eqnarray}
\end{lemma}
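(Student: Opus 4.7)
The plan is to prove Azuma's inequality by the standard Chernoff--Hoeffding route: combine an exponential Markov inequality with a uniform bound on the conditional moment generating function of each increment, then optimize a scalar parameter at the end.

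First, I would fix any $\theta>0$ and apply Markov's inequality to the positive random variable $e^{\theta(X_t-X_0)}$:
\[
P(X_t-X_0>\epsilon) \le e^{-\theta\epsilon}\, E\!\left[e^{\theta(X_t-X_0)}\right].
\]
Then telescope $X_t-X_0=\sum_{s=1}^{t}(X_s-X_{s-1})$ and peel off the last increment using the tower property,
\[
E\!\left[e^{\theta(X_t-X_0)}\right] = E\!\left[e^{\theta(X_{t-1}-X_0)}\, E\!\bigl[e^{\theta(X_t-X_{t-1})}\mid \mathcal{F}_{t-1}\bigr]\right],
\]
so everything reduces to controlling the conditional m.g.f.\ of a single increment.

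The next and central step is the conditional Hoeffding bound
\[
E\!\bigl[e^{\theta(X_s-X_{s-1})}\mid \mathcal{F}_{s-1}\bigr] \le \exp\!\left(\tfrac{1}{2}\theta^2 c_s^2\right), \qquad \theta>0.
\]
Since $-c_s\le X_s-X_{s-1}\le c_s$, I would write $X_s-X_{s-1}$ as the convex combination $\lambda c_s+(1-\lambda)(-c_s)$ with $\lambda=(X_s-X_{s-1}+c_s)/(2c_s)$ and use convexity of $y\mapsto e^{\theta y}$ to dominate $e^{\theta(X_s-X_{s-1})}$ by the corresponding convex combination of $e^{\theta c_s}$ and $e^{-\theta c_s}$. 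Taking conditional expectations, the supermartingale hypothesis $E[X_s-X_{s-1}\mid\mathcal{F}_{s-1}]\le 0$ together with $\theta>0$ lets me drop the $E[\lambda\mid\mathcal{F}_{s-1}]$-linear term in the right direction, leaving an expression in $\theta$ that is bounded above by $e^{\theta^2 c_s^2/2}$ via the usual one-line calculus (or power-series) estimate.

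Iterating this bound from $s=t$ down to $s=1$ yields $E[e^{\theta(X_t-X_0)}] \le \exp\!\bigl(\tfrac{1}{2}\theta^2\sum_{s=1}^{t} c_s^2\bigr)$, so
\[
P(X_t-X_0>\epsilon)\le \exp\!\left(-\theta\epsilon+\tfrac{1}{2}\theta^2\sum_{s=1}^{t} c_s^2\right),
\]
and minimizing the scalar quadratic in $\theta$ at $\theta^{\ast}=\epsilon/\sum_{s=1}^{t} c_s^2$ gives exactly the claimed bound. The only genuinely delicate point is the conditional Hoeffding step: the classical lemma is typically stated for unconditional, mean-zero bounded variables, and I must carefully check that the convexity argument transfers to conditional expectations under only a \emph{one-sided} mean condition $E[\cdot\mid\mathcal{F}_{s-1}]\le 0$. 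This is the one place the supermartingale (rather than martingale) hypothesis and the sign constraint $\theta>0$ must both be used; the remainder of the argument is routine.
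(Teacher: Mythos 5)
Your argument is correct: the exponential Markov bound, the conditional Hoeffding estimate $E[e^{\theta(X_s-X_{s-1})}\mid\mathcal{F}_{s-1}]\le\cosh(\theta c_s)\le e^{\theta^2 c_s^2/2}$ (where the one-sided condition $E[X_s-X_{s-1}\mid\mathcal{F}_{s-1}]\le 0$ together with $\theta>0$ is exactly what lets you discard the $\sinh$ term), and the optimization $\theta^\ast=\epsilon/\sum_{s=1}^t c_s^2$ reproduce the stated bound precisely. Note that the paper itself gives no proof of this lemma --- it is quoted as a standard result and used as a black box in the proof of Lemma \ref{lemmaB1} --- so there is nothing to compare against; your derivation is the classical Chernoff--Hoeffding proof, correctly adapted to the supermartingale setting.
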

\begin{proof}[Proof of lemma \ref{lemmaB1}]
	We'll apply Azuma's inequality to an auxiliary process constructed as following: let $\delta$ be as in lemma \ref{lemma2.1}, for each history $h_t$, define 
	\begin{eqnarray}
		B_t(h_t)=\lambda_t^\varepsilon(h_t)+\frac{\delta}{2}t.
	\end{eqnarray}
	It is direct to verify that $B_t$ is still a supermartingale as 
	\begin{eqnarray}
		E[B_{t+1}-B_t|h_t]=E[\lambda^\varepsilon_{t+1}-\lambda^\varepsilon_t+\frac{\delta}{2}|h_t]\leq -\frac{\delta}{2}.
	\end{eqnarray}
	Because $\lambda_{t+1}^\varepsilon-\lambda_t^\varepsilon$ takes either $\log l_t^*$ or $\log\frac{1-pp_B(l_t^*)}{1-pp_A(l_t^*)}$ and that $l_t^*$ is bounded away from $0$ and $+\infty$, there exists $d>0$
	\begin{eqnarray}
		|\lambda_{t+1}^\varepsilon-\lambda_t^\varepsilon|\leq d, \forall t.
	\end{eqnarray}
	It is immediate that 
	\begin{eqnarray}
		|B_{t+1}-B_t|\leq d+\frac{\delta}{2}
	\end{eqnarray} 
	Applying the Azuma's inequality to $B_t$, we have
	\begin{eqnarray}
		P(B_t\geq B_0+\frac{\delta t}{4})\leq e^{-\frac{\delta^2 t}{32 (d+\frac{\delta}{2})^2}}.
	\end{eqnarray}
	From the construction of $B_t$, it is obvious that 
	\begin{eqnarray}
		P(B_t\geq B_0+\frac{\delta t}{4})=P(\lambda_t^\varepsilon\geq \lambda_0^\varepsilon-\frac{\delta t}{4}).
	\end{eqnarray}
	So $\lambda_t^\varepsilon$ converges to $-\infty$ in probability.
\end{proof}

\subsection{Proof of proposition \ref{thm2.2}}
We have decompose $\lambda_t$ into the sum of martingale $M_t$ and drift $A_t$.
\begin{eqnarray}
	\lambda_t=M_t+A_t.
\end{eqnarray}
We could directly compute the martingale difference $M_t-M_{t-1}$ as
\begin{eqnarray}
	&&M_t-M_{t-1}=\lambda_t-\lambda_{t-1}-E[\lambda_t-\lambda_{t-1}|\mathcal{F}_{t-1}]\notag\\
	&=& 
	\begin{cases}
		[1-p(p_A(l^*_{t-1})+\varepsilon)]\log \frac{l^*_{t-1}(1-pp_A(l^*_{t-1}))}{1-pp_B(l^*_{t-1})}, \mbox{ with prob } p(p_A(l^*_{t-1})+\varepsilon)\\
		-p(p_A(l^*_{t-1})+\varepsilon)\log \frac{l^*_{t-1}(1-pp_A(l^*_{t-1}))}{1-pp_B(l^*_{t-1})}, \mbox{ with prob }
		[1-p(p_A(l^*_{t-1})+\varepsilon)]
	\end{cases}
\end{eqnarray}
Let 
\begin{eqnarray}
	\sigma_{t-1}^2=E[(M_t-M_{t-1})^2|\mathcal{F}_{t-1}].
\end{eqnarray}
We can compute it out as 
\begin{eqnarray}
	\sigma_{t-1}^2=[1-p(p_A(l^*_{t-1})+\varepsilon)]p(p_A(l^*_{t-1})+\varepsilon)\Big(\log \frac{l^*_{t-1}(1-pp_A(l^*_{t-1}))}{1-pp_B(l^*_{t-1})}\Big)^2
\end{eqnarray}
Since $l_{t-1}^*$ is bounded away from $0$ and $+\infty$, $\sigma_{t-1}^2$ is always bounded from above
\begin{eqnarray}
	\exists S\in \mathbb{R}\, s.t. \sigma^2_{t-1}\leq S,\, \forall t.
\end{eqnarray}
Here we cannot conclude that $\sigma_{t-1}^2$ is always bounded from below by a positive number, as $l_{t-1}^*$ could be $1$. However, we could show that $\sigma_{t-1}^2$ is bounded from below a.s. for infinitely many periods. As a result, $\sum_{t=0}^{+\infty} \sigma^2_{t}$ diverges to $+\infty$ almost surely. 

This follows from the observation that $l_{t}^*\to 1$ must happen with probability $0$. Assume otherwise, 
\begin{eqnarray}
	H_0=\{h|l_t^*(h)\to 1\} \mbox{ and } P(H_0)>0.
\end{eqnarray}
then for any $h\in H_0$, the FOC for maximization problem \ref{eqn1.1} says that
\begin{eqnarray}
	P'(I(u,l))\frac{u(1-u)\log l^*}{u+(1-u)l^*}p_A(l^*)+P(I(u,l))[(u+(1-u)l^*)p_A'(l^*)+(1-u)p_A(l^*)]=0\notag
\end{eqnarray}
holds for all $u=u_t(h),l^*=l_t^*(h)$.
Let $t\to +\infty$, using $l_t^*(h)\to 1$, we conclude that 
\begin{eqnarray}
	\label{eqnB19}
	\lim_{t\to +\infty} u_t(h)=1+\frac{p_A'(1)}{p_A(1)}
\end{eqnarray}
As $l_A\neq 1$, either $p_A'(1)>0$ or $p_A'(1)<0$ holds. In the first case, \ref{eqnB19} cannot hold. In the second case, \ref{eqnB19} implies that $u_t$ converges to an interior value with probability $P(H_0)$. This contradicts lemma \ref{lemmaB1}. Lastly, as $l_t^*\to 1$ a.s. does not happen, the event that $l_t^*$ is bounded away from $1$ for infinitely many periods happens almost surely. It is direct to verify that $l_t^*$ bounded away from $1$ implies that $\sigma_{t-1}^2$ is bounded from below by a positive number. So we proved that $\sum_{t=0}^{+\infty} \sigma^2_{t}$ diverges to $+\infty$ almost surely. 

For each $\nu\in \mathbb{N}$, let $\tau_\nu$ be the first time that sum of $\sigma_t^2$ reaches $\nu$, that is,
\begin{eqnarray}
	\tau_\nu=\inf\{k\in \mathbb{N}| \sum_{t=0}^k\sigma_t^2\geq \nu\}.
\end{eqnarray}
Using the bounds of $\sigma_t^2$, we could derive a bound on $\tau_\nu$ as
\begin{eqnarray}
	\label{eqn10.8}
	\frac{\nu}{S}-1\leq \tau_\nu
\end{eqnarray}

It is easy to verify that $M_t$ is a martingale of bounded increment. 
Apply the martingale CLT, 
\begin{eqnarray}
	\label{eqn10.9}
	\sup_{x\in \mathbb{R}}|P(M_{\tau_\nu}\leq x\sqrt{v})-\Phi(x)|\leq \frac{c}{\nu^{\frac{1}{4}}}
\end{eqnarray}
Here $\Phi(x)$ is the distribution function of standard normal and $c$ is a constant that depend on the uniform bound of the martingale difference terms.

Recall that we have shown that $E[\lambda_t-\lambda_{t-1}|\mathcal{F}_{t-1}]\leq -\delta$ for sufficiently small $\varepsilon$. So $A_t\leq -\delta t$.
As $M_{\tau_\nu}=\lambda_{\tau_\nu}-A_{\tau_\nu}$, 
that $M_{\tau_\nu}\leq x\sqrt{\nu}$ implies that $\lambda_{\tau_\nu}\leq x\sqrt{\nu}-\delta \tau_\nu$, which further implies that $\lambda_{\tau_\nu}\leq x\sqrt{\nu}-\delta \frac{\nu}{S}+\delta$ using \ref{eqn10.8}. Thus, 
\begin{eqnarray}
	P(M_{\tau_\nu}\leq x\sqrt{\nu})\leq P(\lambda_{\tau_\nu}\leq x\sqrt{\nu} - \frac{\delta}{S}\nu+1)
\end{eqnarray}
Taking $x=\frac{\delta}{2S}\sqrt{\nu}$, then martingale CLT \ref{eqn10.9} implies that 
\begin{eqnarray}
	\Phi(\frac{\delta}{2S}\sqrt{\nu})-\frac{c}{\nu^{\frac{1}{4}}}\leq P(M_{\tau_\nu}\leq x\sqrt{\nu})\leq P(\lambda_{\tau_\nu}\leq - \frac{\delta}{2S}\nu+1).
\end{eqnarray}

Now let $\mu=\nu^8$ and consider the subsequence $\lambda_{\tau_\mu}$, then 
\begin{eqnarray}
	P(\lambda_{\tau_\mu}>-\frac{\delta}{2S}\nu+1)\leq P(\lambda_{\tau_\mu}>-\frac{\delta}{2S}\mu+1) \leq  \frac{c}{\nu^2}+1-\Phi(\frac{\delta}{2S}\nu^4)
\end{eqnarray}
Using the fact that $1-\Phi(x)\leq e^{-\frac{x^2}{2}}$ holds for $x>0$, we further obtains
\begin{eqnarray}
	P(\lambda_{\tau_\mu}>-\frac{\delta}{2S}\nu+1)\leq \frac{c}{\nu^2}+e^{-\frac{\delta^2}{8S^2}\nu^4}.
\end{eqnarray}
Using Borel-Cantelli's lemma, as 
\begin{eqnarray}
	\sum_{\nu=1}^{+\infty}(\lambda_{\tau_\mu}>-\frac{\delta}{2S}\nu+1)<+\infty
\end{eqnarray}
we know that 
\begin{eqnarray}
	P(\lambda_{\tau_\mu}\leq -\frac{\delta}{2S}\nu+1 \mbox{ eventually})=1.
\end{eqnarray}
Thus, the subsequence $\lambda_{\tau_\mu}\to -\infty$ almost surely. This further implies that $\lambda_t\to -\infty$ almost surely as stopping times $\tau_\mu$ diverge to $+\infty$ a.s..

\section{Omitted part in the proof of Theorem \ref{thm2.7}}
\label{appendixC}
In this section we prove the following proposition
\begin{proposition}
	\label{propC1}
	If $\lim_{\lambda_t^\varepsilon\to -\infty} l^*(\lambda_t^\varepsilon)=+\infty$, then the event that $\lambda_t^\varepsilon\to -\infty$ happens with probability $0$.
\end{proposition}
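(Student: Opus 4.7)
The plan is to reduce the statement to a local ``no-trapping'' claim, then establish that claim by pitting an exponential supermartingale (which forces convergence of $\lambda_{t}^{\varepsilon}$) against a conditional Borel--Cantelli lower bound on the frequency of successes (which forces non-vanishing increments).

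For the reduction, I will locate $M_{1}>0$ with the property that, starting from any $\ell<-M_{1}$, the re-entry time $\tau:=\inf\{t\ge 1:\lambda_{t}^{\varepsilon}\ge -M_{1}\}$ satisfies $P_{\ell}(\tau=\infty)=0$. Granted this local claim, the event $\{\lambda_{t}^{\varepsilon}\to -\infty\}$ is contained in the increasing union $\bigcup_{N\ge 0}F_{N}$, where $F_{N}:=\{\lambda_{t}^{\varepsilon}<-M_{1}\ \text{for all}\ t\ge N\}$; the Markov property at time $N$ gives
\[ P(F_{N})=E\bigl[P_{\lambda_{N}^{\varepsilon}}(\tau=\infty)\cdot 1_{\{\lambda_{N}^{\varepsilon}<-M_{1}\}}\bigr]=0, \]
and monotone continuity yields $P(\lambda_{t}^{\varepsilon}\to -\infty)\le \lim_{N}P(F_{N})=0$.

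To pick the threshold $M_{1}$, I will use two asymptotic facts as $l\to+\infty$. The consistency $p_{B}(l)=l\cdot p_{A}(l)$ together with $p_{B}\le 1$ forces $p_{A}(l)\le 1/l$, whence $pp_{A}(l)\log l\to 0$; combined with $p_{A},p_{B}\to 0$ this gives $\log\tfrac{1-pp_{B}}{1-pp_{A}}\to 0$, so the drift
\[ D(l):=p(p_{A}+\varepsilon)\log l+(1-p(p_{A}+\varepsilon))\log\tfrac{1-pp_{B}}{1-pp_{A}}=p\varepsilon\log l+o(1)\to +\infty. \]
Moreover, for any fixed $s>0$,
\[ E\bigl[e^{-s(\lambda_{t+1}^{\varepsilon}-\lambda_{t}^{\varepsilon})}\mid\lambda_{t}^{\varepsilon}\bigr]=p(p_{A}+\varepsilon)(l^{*})^{-s}+(1-p(p_{A}+\varepsilon))\bigl(\tfrac{1-pp_{A}}{1-pp_{B}}\bigr)^{s}\longrightarrow 1-p\varepsilon<1 \]
as $l^{*}\to+\infty$. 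Combining these limits with the hypothesis $l^{*}(\lambda)\to+\infty$ as $\lambda\to-\infty$, I will choose $M_{1}$ and $L_{1}>1$ so that for every $\lambda<-M_{1}$: (i) $l^{*}(\lambda)\ge L_{1}$; (ii) $D(l^{*}(\lambda))\ge\delta$ for some $\delta>0$; and (iii) the conditional exponential above is $\le 1$, i.e.\ $V(\lambda):=e^{-s\lambda}$ has the supermartingale property along the chain while it lies in $(-\infty,-M_{1})$.

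For the local claim itself, (iii) implies that $V(\lambda_{t\wedge\tau}^{\varepsilon})$ is a nonnegative supermartingale and hence converges almost surely to a finite random limit. On $\{\tau=\infty\}$ this forces $\lambda_{t}^{\varepsilon}$ itself to converge to some finite $\lambda_{\infty}^{\varepsilon}\le -M_{1}$, and therefore $\lambda_{t+1}^{\varepsilon}-\lambda_{t}^{\varepsilon}\to 0$ almost surely there. To contradict this, I note that at each step the conditional probability of a ``success'' is at least $p\varepsilon>0$, so by L\'evy's conditional Borel--Cantelli lemma successes occur at infinitely many times $t$ almost surely; at every such time the increment equals $\log l^{*}_{t}\ge\log L_{1}>0$ by (i). This is incompatible with $\lambda_{t+1}^{\varepsilon}-\lambda_{t}^{\varepsilon}\to 0$, yielding $P_{\ell}(\tau=\infty)=0$. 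The main technical difficulty is securing the exponential supermartingale (iii) without any rate control on $l^{*}(\lambda)\to\infty$; this works for every $s>0$ because the limit $1-p\varepsilon$ is strictly less than $1$, and that is precisely where the assumption $\varepsilon>0$ enters essentially.
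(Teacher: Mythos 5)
Your proposal is correct, and while its overall architecture mirrors the paper's (localize to a region $\{\lambda<-M_1\}$ where $l^*$ is uniformly large, decompose $\{\lambda_t^\varepsilon\to-\infty\}$ into countably many ``trapped after time $N$'' events, then kill each piece with a martingale convergence argument), the key step is executed by a genuinely different device. The paper cannot apply a convergence theorem to the stopped process $\lambda_{t\wedge\tau}^\varepsilon$ directly because its upward jumps $\log l_t^*$ are unbounded; it therefore builds an auxiliary submartingale $X_t\le\lambda_{t\wedge\tau}^\varepsilon$ whose success increment is truncated at $\log\overline{l}$, so that bounded increments plus $\sup_t E[X_t^+]<\infty$ let Doob's theorem force a finite limit. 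You instead pass to the exponential transform $e^{-s\lambda_{t\wedge\tau}^\varepsilon}$, for which the unbounded upward jumps are harmless (they only shrink the process), and the computation $E[e^{-s(\lambda_{t+1}^\varepsilon-\lambda_t^\varepsilon)}\mid\mathcal{F}_t]\to 1-p\varepsilon<1$ is exactly where $\varepsilon>0$ enters; nonnegative supermartingale convergence then needs no integrability hypotheses at all, and already on $\{\tau=\infty\}$ rules out $\lambda_t^\varepsilon\to-\infty$ (which would send $e^{-s\lambda_t^\varepsilon}$ to $+\infty$). Your additional conditional Borel--Cantelli step is not needed for the proposition itself, but it buys the strictly stronger conclusion $P_\ell(\tau=\infty)=0$, i.e.\ the belief almost surely returns above the threshold rather than merely failing to diverge; note also that your drift condition (ii) is never used in the final contradiction and can be dropped, and that the inequality $p_A(l)\le 1/l$ (from the footnoted consistency condition $p_B=l\,p_A$) is likewise dispensable since the exponential bound only needs $p_A,p_B\to 0$ from Assumption \ref{assumption1.1}.
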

This together with proposition \ref{prop4.3} prove theorem \ref{thm2.7}.

\begin{proof}[Proof of proposition \ref{propC1}]
If $l_t^*(\lambda_t^\varepsilon)\to +\infty$ as $\lambda_t^\varepsilon\to -\infty$, then 
\begin{eqnarray}
	&&\lim_{\lambda_t^\varepsilon\to-\infty}E_t[\lambda_{t+1}^\varepsilon-\lambda_t^\varepsilon|\lambda_t^\varepsilon]\notag\\
	&=&\lim_{l^*\to +\infty}\Big(pp_A(l^*)\log l^*+(1-pp_A(l^*))\log \frac{1-pp_B(l^*)}{1-pp_A(l^*)}\Big)
	+\varepsilon p\log \frac{l^*(1-pp_A(l^*))}{1-pp_B(l^*)}\notag\\
	&=&+\infty
\end{eqnarray}
In other words, the expected change of $\lambda_t^\varepsilon$ turns positive when $\lambda_t^\varepsilon$ is small enough. This intuitively justifies that the event that $\lambda_t^\varepsilon$ happens with probability zero.

The rigorous proof is given as following. 
Let 
\begin{eqnarray}
	H=\{h|\lim_{t\to+\infty} \lambda_t^\varepsilon(h)=-\infty\}.
\end{eqnarray}
That is, $H$ is the set of histories along which complete learning happens. We would break $H$ into a countable union of sets. Let $\bar{\lambda}$ be as defined in lemma \ref{lemmaC1}. For any period $s$, let
\begin{eqnarray}
	H_s=\{h|\lambda_t^\varepsilon<\overline{\lambda}, \forall t\geq s\mbox{ and }
	\lim_{t\to+\infty} \lambda_t^\varepsilon(h)=-\infty
	\}.
\end{eqnarray}
In other words, $H_s$ is the set of histories along which complete learning happens and the belief $\lambda_t^\varepsilon$ stays strictly below $\bar{\lambda}$ from period $s$ on.
It is direct to verify that 
\begin{eqnarray}
	H=\bigcup_{s\in \mathbb{N}}H_s.
\end{eqnarray}
The set $H_s$ can be further broken into a countable union. As in each period, the chosen project $l^*$ could either succeed or fail, there are $2^{s-1}$ different histories at period $s$. Let $\bar{h}_s$ be the set of histories $h_s$ that satisfy the condition that $\lambda_s^\varepsilon(h_s)<\bar{\lambda}$. For these $h_s$, we define
\begin{eqnarray}
	H^{h_s}=\{h|\lambda_s^\varepsilon<\overline{\lambda}, \forall t\geq s\mbox{ and }
	\lim_{t\to+\infty} \lambda_t^\varepsilon(h)=-\infty, \mbox{ and first s periods of } h \mbox{ is } h_s
	\}.
\end{eqnarray}
It is direct to verify that 
\begin{eqnarray}
	H_s=\bigcup_{h_s\in \overline{h}_s} H^{h_s}.
\end{eqnarray}
Then, if $P(H)>0$, there must exists at least one $h_{s_0}$ such that $P(H^{h_{s_0}})>0$. We shall show that this would lead to a contradiction.

Conditional on history $h_{s_0}$, define stopping time
\begin{eqnarray}
	\tau=\inf \{t>s_0|\lambda_t^\varepsilon\geq \bar{\lambda}\}.
\end{eqnarray}
to be the first time that $\lambda_t^\varepsilon$ escaping the region $(-\infty,\bar{\lambda})$. Construct auxiliary process $X_t$
\footnote{Here $X_t$ are defined conditional on history $h_{s_0}$. That is, it is defined for history $h$ whose first $s_0$ periods agrees with $h_{s_0}$.}
in the following way:
\begin{enumerate}
	\item if $t<\tau(h)$, then 
	\begin{eqnarray}
		X_{t+1}(h)-X_t(h)=
		\begin{cases}
			\log\overline{l};\mbox{if along } h \mbox{ project } l_t^* \mbox{ succeeds}\\
			\log\frac{1-pp_B(l^*_t(h))}{1-pp_A(l^*_t(h))};\mbox{if along } h \mbox{ project } l_t^* \mbox{ fails}
		\end{cases}
	\end{eqnarray}
\item if $t\geq \tau(h)$, then $X_{t+1}(h)-X_t(h)=0$.
\end{enumerate}
In other words, before $\lambda_t^\varepsilon$ first escape region $(-\infty,\bar{\lambda})$, $X_t$ evolves in the same way as $\lambda_t^\varepsilon$ does. The only difference is that $\lambda_t^\varepsilon$ moves up by value $\log(l_t^*(\lambda_t^\varepsilon))$, while $X_t$ moves up by a smaller value $\log\overline{l}$
\footnote{
As $\lambda_t^\varepsilon$ is in $(-\infty,\bar{\lambda})$, $l_t^*>\overline{l}$ by lemma \ref{lemmaC1}.
}
. Once $\lambda_t^\varepsilon$ escape $(-\infty,\bar{\lambda})$, $X_t$ stop evolving. 
It is immediate to verify that 
\begin{eqnarray}
	X_t\leq \lambda_{t\wedge \tau}^\varepsilon.
\end{eqnarray}
That is, $X_t$ stays below the stopped process $\lambda_{t\wedge \tau}^\varepsilon$ for all histories.

We could verify that $X_t$ is a submartingale. The expected change of $X_t$ conditional on history $h_t,t\geq s_0$ could take two values:
\begin{enumerate}
	\item if $t<\tau(h_t)$,
	then 
	\begin{eqnarray}
		&&E[X_{t+1}(h)-X_t(h)|h_t]\notag\\
		&=&p(p_A(l^*(h_t))+\varepsilon)\log \overline{l}+[1-p(p_A(l^*(h_t))+\varepsilon)]\log\frac{1-pp_B(l^*(h_t))}{1-pp_A(l^*(h_t))}
	\end{eqnarray} 
This is positive by lemma \ref{lemmaC1}.
	\item if $t<\tau(h_t)$, then $E[X_{t+1}(h)-X_t(h)|h_t]=0$.
\end{enumerate}
Obviously that $X_t$ is of bounded increment, so $\sup_{t} X_t^+<+\infty$. Thus, we could apply submartingale convergence theorem to conclude that $X_t$ converges almost surely to a limit random variable $X_\infty$ with $E|X_\infty|<+\infty$.
\footnote{Note that we cannot directly apply submartingale convergence theorem to the stopped process $\lambda_{t\wedge \tau}^\varepsilon$, as this stopped process needs not to be of bounded increment and hence $\sup_{t} X_t^+<+\infty$ needs not to be true.}
That $E|X_\infty|<+\infty$ implies that 
\begin{eqnarray}
	P(\{h|\lim_{t\to +\infty} X_t(h)=-\infty\}|h_{s_0})=0.
\end{eqnarray}
That is, conditional on history $h_{s_0}$, the event that $X_t$ goes to $-\infty$ happens with probability $0$. As $X_t\leq \lambda_{t\wedge \tau}^\varepsilon$, the event that $\lambda_{t\wedge \tau}^\varepsilon$ goes to $-\infty$ must happen with probability $0$, conditional on history $h_{s_0}$.

However, for any history $h\in H^{h_{s_0}}$, we must have
\begin{eqnarray}
	\lim_{t\to +\infty}\lambda_{t\wedge \tau}^\varepsilon(h)=-\infty.
\end{eqnarray}
Thus, the assumption that $P(H^{h_{s_0}})>0$ contradicts the conclusion in the last paragraph.
\end{proof}

\subsection{Some notations}
We include the following statement to nail down necessary notations.
\begin{lemma}
	\label{lemmaC1}
	For any $\delta$ be a sufficiently small positive number, there exists a $\bar{\lambda}<0$
	such that 
	\begin{eqnarray}
		&&\lambda_t^\varepsilon<\bar{\lambda}\notag\\
		&\Rightarrow& l^*(\lambda_t^\varepsilon)>\overline{l}>1\notag\\
		&\Rightarrow& p(p_A(l^*)+\varepsilon)\log \overline{l}+[1-p(p_A(l^*)+\varepsilon)]\log\frac{1-pp_B(l^*)}{1-pp_A(l^*)}\geq \varepsilon p \log\overline{l}-\delta>0
	\end{eqnarray}
\end{lemma}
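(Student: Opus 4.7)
The lemma is really a quantitative continuity statement that upgrades the qualitative hypothesis $\lim_{\lambda_t^\varepsilon\to-\infty}l^*(\lambda_t^\varepsilon)=+\infty$ (inherited from the setting of Proposition \ref{propC1}) into a concrete uniform lower bound on the drift of the auxiliary process $X_t$. My plan is to first fix a convenient threshold $\overline{l}>1$ and use the hypothesis to translate "$\lambda_t^\varepsilon$ sufficiently negative" into "$l^*(\lambda_t^\varepsilon)$ arbitrarily large," and then use Assumption \ref{assumption1.1} (the vanishing of $p_A(l)$ and $p_B(l)$ at infinity) to compute the limit of the drift expression and compare it to $\varepsilon p \log\overline{l}-\delta$.

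Concretely, first I would pick any $\overline{l}>1$ with $\varepsilon p \log\overline{l}>\delta$; this is exactly the place where the hypothesis that $\delta$ is "sufficiently small" is used (for any fixed $\overline{l}>1$, all $\delta<\varepsilon p \log\overline{l}$ work, and since $\log$ is unbounded there is no obstruction). Next, I would invoke the hypothesis of Proposition \ref{propC1} to produce $\overline{\lambda}_1<0$ such that $\lambda_t^\varepsilon<\overline{\lambda}_1$ implies $l^*(\lambda_t^\varepsilon)>\overline{l}$, which gives the first implication in the lemma.

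For the second implication I would analyze the limit
\[
\lim_{l^*\to+\infty}\Bigl\{p(p_A(l^*)+\varepsilon)\log\overline{l}+[1-p(p_A(l^*)+\varepsilon)]\log\frac{1-pp_B(l^*)}{1-pp_A(l^*)}\Bigr\}.
\]
By Assumption \ref{assumption1.1}, $p_A(l^*)\to 0$ and $p_B(l^*)\to 0$ as $l^*\to+\infty$, so the second summand tends to $[1-p\varepsilon]\log 1=0$, while the first tends to $\varepsilon p\log\overline{l}$. Thus the whole expression converges to $\varepsilon p\log\overline{l}$. By the definition of limit, there exists $L>\overline{l}$ such that whenever $l^*>L$ the drift already exceeds $\varepsilon p\log\overline{l}-\delta$. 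Using the hypothesis once more, I obtain $\overline{\lambda}_2<0$ with $\lambda_t^\varepsilon<\overline{\lambda}_2\Rightarrow l^*(\lambda_t^\varepsilon)>L$, and I take $\overline{\lambda}=\min(\overline{\lambda}_1,\overline{\lambda}_2)$. Then the positivity tail $\varepsilon p\log\overline{l}-\delta>0$ is exactly the choice of $\overline{l}$ from the first paragraph.

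There is no substantial obstacle; the only points requiring care are (i) that the choice of $\overline{l}$ and $\delta$ be compatible so that $\varepsilon p\log\overline{l}-\delta>0$, and (ii) that, since $l^*$ is only known to be compact-valued and u.h.c.\ (Corollary after Proposition \ref{prop2.1}) rather than single-valued, the convergence $l^*(\lambda)\to+\infty$ should be interpreted as "every selection diverges," so the bound applies to whichever optimum is selected by the tie-breaking rule fixed in the model. Both are handled by the choices above.
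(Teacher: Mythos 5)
Your proposal is correct and follows essentially the same route as the paper: fix $\overline{l}>1$ with $\varepsilon p\log\overline{l}>\delta$, use $p_A(l),p_B(l)\to 0$ to make the $\log\frac{1-pp_B}{1-pp_A}$ term negligible for large $l^*$, and then use the standing hypothesis $\lim_{\lambda\to-\infty}l^*(\lambda)=+\infty$ to produce $\overline{\lambda}$. The only cosmetic difference is that you pass through a second threshold $L$ and an explicit limit computation where the paper bounds the two summands directly with a single $\overline{l}$; both are fine.
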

\begin{proof}
	As $\lim_{l\to +\infty}p_A(l)=\lim_{l\to +\infty}p_B(l)=0$, $\Big|\log\frac{1-pp_B(l^*)}{1-pp_A(l^*)}\Big|\to 0$. Thus, there exists $\overline{l}>1$ such that the last inequality holds for all $l>\overline{l}$. The existence of $\bar{\lambda}$ comes from the assumption that $\lim_{\lambda_t^\varepsilon\to -\infty}l^*(\lambda^\varepsilon_t)=+\infty$.
\end{proof}

\section{Other omitted proofs}
\label{appendixD}
\subsection{Omitted proof in section \ref{section3}}
Lemma \ref{lemma3.1} is a direct consequence of corollary \ref{coro3.60}.

\begin{proof}[Proof of proposition \ref{prop2.1}]
	We explicitly construct the correspondence $D(u)$ as following.
	
	Choose $l_1$ to be a constant satisfying that $l_1>\max{l_B,1}$, and $l_2$ be a constant satisfying that $l_2<\min\{1,l_A\}$. Let $m(u)=\{-\log u,-\log(1-u)\}$. We have the following claim:
	\begin{claim}
		For each $u\in (0,1)$, 
		\begin{eqnarray}
			P(m(u))p_B(\overline{l})=EP(u,l_1)
		\end{eqnarray}
		uniquely determines a $\overline{l}\in (l_1,+\infty)$ and so-defined function $\overline{l}(u)$ is continuous; similarly,
		\begin{eqnarray}
			P(m(u))p_A(\underline{l})=EP(u,l_2)
		\end{eqnarray}
		uniquely determines a $\underline{l}\in (0,l_2)$ and so-defined function $\underline{l}(u)$ is continuous
	\end{claim}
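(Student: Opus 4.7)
The plan is to exhibit $\overline{l}(u)$ (and, by a symmetric argument, $\underline{l}(u)$) as the image of $u$ under an explicit composition of continuous maps. First I would use Lemma~\ref{lemma3.1} to identify $m(u) = \max\{-\log u, -\log(1-u)\}$ as the (unattained) supremum of $I(u, \cdot)$; in particular $I(u, l_1) < m(u)$ strictly, so by strict monotonicity of $P$ one gets $P(I(u, l_1)) < P(m(u))$.

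Next I would use Assumption~\ref{assumption1.1} together with the choice $l_1 > \max\{l_B, 1\}$ to observe that $p_B$ is continuous, strictly decreasing on $(l_1, +\infty)$, with left endpoint value $p_B(l_1)$ and $p_B(l) \to 0$ as $l \to +\infty$. Hence $p_B|_{(l_1,+\infty)}$ is a homeomorphism onto $(0, p_B(l_1))$, and the equation $P(m(u)) p_B(\overline{l}) = EP(u, l_1)$ has a unique solution $\overline{l} \in (l_1, +\infty)$ iff the ratio $EP(u, l_1)/P(m(u))$ lies in $(0, p_B(l_1))$. The lower bound is immediate since $EP(u, l_1) > 0$ and $P \geq c > 0$. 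For the upper bound I would invoke the consistency relation $p_B(l) = l\, p_A(l)$ noted in the footnote to Assumption~\ref{assumption1.1}: since $l_1 > 1$, $p_A(l_1) < p_B(l_1)$, whence
\[
u\, p_A(l_1) + (1-u)\, p_B(l_1) < p_B(l_1) \quad \text{for } u \in (0,1).
\]
Multiplying this by the strict inequality $P(I(u, l_1)) < P(m(u))$ yields $EP(u, l_1) < P(m(u))\, p_B(l_1)$, as required.

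I would then set $\overline{l}(u) = (p_B|_{(l_1,+\infty)})^{-1}\bigl(EP(u, l_1)/P(m(u))\bigr)$ and read off continuity: $I(u, l_1)$ is continuous in $u$ (it is a rational-plus-logarithmic expression in $u$ with denominator bounded away from zero), $P$ is continuous by the differentiability assumption, and $m$ is continuous on $(0,1)$, so the argument of the inverse is continuous in $u$; composing with the continuous inverse of $p_B|_{(l_1,+\infty)}$ delivers continuity of $\overline{l}$. The case of $\underline{l}(u)$ proceeds by the same template on the interval $(0, l_2)$ with $l_2 < \min\{l_A, 1\}$: there $p_A$ is a strictly increasing homeomorphism onto $(0, p_A(l_2))$, and the strict upper bound follows from $p_B(l_2) = l_2 p_A(l_2) < p_A(l_2)$ (since $l_2 < 1$), giving $u\, p_A(l_2) + (1-u)\, p_B(l_2) < p_A(l_2)$.

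I do not anticipate any genuine obstacle in this claim; the only step that requires care is the strict inequality in the upper bound, and this is handled cleanly by the consistency relation $p_B(l) = l\, p_A(l)$ evaluated at $l_1 > 1$ (resp.\ $l_2 < 1$). Everything else reduces to continuity of elementary compositions and the homeomorphism property of $p_B$ (resp.\ $p_A$) on the appropriate monotonicity interval supplied by Assumption~\ref{assumption1.1}.
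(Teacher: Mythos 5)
Your proposal is correct and follows essentially the same route as the paper: both establish $0 < EP(u,l_1)/P(m(u)) < p_B(l_1)$ via the bound $I(u,l_1)<m(u)$ together with $u p_A(l_1)+(1-u)p_B(l_1)<p_B(l_1)$, and then solve for $\overline{l}$ using the strict monotonicity of $p_B$ on $(l_1,+\infty)$ (the paper phrases existence/uniqueness via the intermediate value theorem and continuity via a sequence argument, which is equivalent to your homeomorphism-and-composition packaging). Your explicit appeal to the consistency relation $p_B(l)=l\,p_A(l)$ to justify $p_A(l_1)<p_B(l_1)$ is a point the paper leaves implicit.
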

	The correspondence $D(u)$ would be $[\underline{l}(u),\overline{l}(u)]$.
	
	We start with the proof of the claim. First, fixing $u$, then for each $l>1$, we have that 
	\begin{eqnarray}
		EP(u,l)=P(I(u,l))[up_A(l)+(1-u)p_B(l)]<P(I(u,l))p_B(l)\leq P(m(u))p_B(l).
	\end{eqnarray}
	Thus, $P(m(u))p_B(l_1)>EP(u,l_1)$. On the other hand, $\lim_{l\to +\infty}P(m(u))p_B(l)=0<EP(u,l_1)$. By continuity, there must exist a $\overline{l}(u)\in (l_1,+\infty)\subset (l_B,+\infty)$ such that the equality holds. Furthermore, since $p_B(l)$ strictly decreases on $(l_B,+\infty)$, such $\overline{l}$ must be unique. Now let $u_k\to \overline{u}\in (0,1)$, by continuity of $EP(u,l_1)$ and $P(m(u))$,  $p_B(\overline{l}(u_k))=\frac{EP(u_k,l_1)}{P(m(u_k))}$ must converge to $\frac{EP(\overline{u},l_1)}{P(m(\overline{u}))}$
	. That is, $\lim_{k\to +\infty} p_B(\overline{l}_k(u_k))= p_B(\overline{l}(\overline{u}))$.
	As $\{\overline{l}_k(u_k)\},\overline{l}(\overline{u})\subset (l_1,+\infty)$ and $P_B(l)$ strictly decreases on $(l_1,+\infty)$, we must have $\lim_{k\to +\infty} \overline{l}_k(u_k)= \overline{l}(\overline{u})$.
	
	A similar argument works for the second part of the claim. First, fixing $u$, then for each $l<1$,
	\begin{eqnarray}
		EP(u,l)=P(I(u,l))[up_A(l)+(1-u)p_B(l)]<P(I(u,l))p_A(l)\leq P(m(u))p_A(l).
	\end{eqnarray}
	Thus $P(m(u))p_A(l_2)>EP(u,l_2)$. On the other hand, $\lim_{l\to 0} P(m(u))p_A(l)=0<EP(u,L_2)$. So by continuity, there must exist $\underline{l}(u)\in (0,l_2)$ such that the equality holds. Furthermore, since $p_A(l)$ strictly increases on $(0,l_2)$, such a $\underline{l}(u)$ must be unique. The argument for the continuity of $\underline{l}(u)$ is exactly the same as in the first part.
	
	From the construction, $D(u)=[\underline{l}(u),\overline{l}(u)]$ is a compact-valued, continuous correspondence. Furthermore, we must have 
	\begin{eqnarray}
		&&EP(u,l)<P(m(u))p_A(l)<P(m(u))p_A(\underline{l}(u))=EP(u,l_2), \forall l<\underline{l}(u)\notag\\
		&&EP(u,l)<P(m(u))p_B(l)<P(m(u))p_B(\overline{l}(u))=EP(u,l_1), \forall l>\overline{l}(u)\notag
	\end{eqnarray}
	Thus, any $l\notin D(u)$ cannot be optimal.
\end{proof}

\subsection{Omitted proof in section \ref{constrictedness}}
\begin{lemma}
	\label{lemma3.5}
	Let $u_k\to 1, l_k\in l^*(u_k)$ and $\{l_k\}\subset (0,M)$ with $M<+\infty$, then 
	\begin{eqnarray}
		l_k\to l_A \mbox{ and } \lim_{k\to +\infty} EP(u_k,l_k)=cp_A(l_A).
	\end{eqnarray}
	Besides, it is direct to verify that 
	\begin{eqnarray}
		u_k\to 1\Rightarrow EP(u_k,l_A)\to cp_A(l_A).
	\end{eqnarray}
\end{lemma}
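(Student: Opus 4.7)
The plan is to reduce everything to the single observation that $I(u_k,l_k)\to 0$ whenever $u_k\to 1$ and $l_k\in(0,M)$, from which both the payoff limit and the identification $l_k\to l_A$ follow quickly. First I would establish this key estimate. Unpacking
$$I(u,l)=\frac{u}{u+(1-u)l}\log\frac{1}{u+(1-u)l}+\frac{(1-u)l}{u+(1-u)l}\log\frac{l}{u+(1-u)l},$$
the denominator $u_k+(1-u_k)l_k$ lies between $u_k$ and $\max\{1,l_k\}$, so it tends to $1$ as $u_k\to 1$. The first summand therefore vanishes. For the second summand I split $\log l_k-\log(u_k+(1-u_k)l_k)$; the subtracted piece is controlled by the previous remark, while the $\log l_k$ piece is tamed by noting that $l\mapsto l\,|\log l|$ is bounded on $(0,M]$ (since $\lim_{l\to 0^+}l\log l=0$). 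Combined with the prefactor bound $\frac{(1-u_k)l_k}{u_k+(1-u_k)l_k}\le \frac{(1-u_k)l_k}{u_k}$, this gives $I(u_k,l_k)\to 0$.

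Given continuity of $P$ and $P(0)=c$, this yields $P(I(u_k,l_k))\to c$. Rewriting the expected payoff as
$$EP(u_k,l_k)=P(I(u_k,l_k))\bigl[p_A(l_k)+(1-u_k)(p_B(l_k)-p_A(l_k))\bigr]$$
and using that $p_A,p_B$ are bounded, the $(1-u_k)$ correction vanishes, so $|EP(u_k,l_k)-c\,p_A(l_k)|\to 0$. Specializing this computation to the constant sequence $l_k\equiv l_A$ yields the ``besides'' statement $EP(u_k,l_A)\to c\,p_A(l_A)$ directly.

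The last step compares $l_k$ with $l_A$. By optimality, $EP(u_k,l_k)\ge EP(u_k,l_A)\to c\,p_A(l_A)$, hence $\liminf_k EP(u_k,l_k)\ge c\,p_A(l_A)>0$. Since $\{l_k\}\subset(0,M)$ is bounded, any subsequence has a further subsequence $l_{k_j}\to l'\in[0,M]$. If $l'=0$, Assumption \ref{assumption1.1}(2) gives $p_A(l_{k_j})\to 0$, so $EP(u_{k_j},l_{k_j})\to 0$, contradicting the positive lower bound. If $l'\in(0,M]$, continuity of $p_A$ gives $EP(u_{k_j},l_{k_j})\to c\,p_A(l')$, so $p_A(l')\ge p_A(l_A)$, which forces $l'=l_A$ by the strict maximality guaranteed by Assumption \ref{assumption1.1}(1). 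Every subsequential limit being $l_A$ yields $l_k\to l_A$, and the payoff formula then gives $EP(u_k,l_k)\to c\,p_A(l_A)$.

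The main obstacle I anticipate is the boundary case $l_k\to 0$, where $\log l_k\to -\infty$ threatens to blow up in the formula for $I(u_k,l_k)$. The remedy rests on the elementary fact $\lim_{l\to 0^+}l\,|\log l|=0$, which just barely saves the argument and is what permits uniform control over the entire range $(0,M)$ rather than only over compact subsets of $(0,+\infty)$.
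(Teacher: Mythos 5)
Your proof is correct, but it identifies the limit $l_A$ by a genuinely different mechanism than the paper. The paper's proof passes to a convergent subsequence $l_{k_n}\to\overline{l}$, rules out $\overline{l}=0$ by a payoff comparison with the project $l=1$, and then invokes the first-order condition of the maximization problem: letting $k_n\to+\infty$ in the FOC, the information-derivative term vanishes and one is left with $P(0)\,p_A'(\overline{l})=0$, so $\overline{l}=l_A$ because $l_A$ is the unique critical point under Assumption \ref{assumption1.1}. You instead avoid the FOC entirely: you show $I(u_k,l_k)\to 0$ by a direct estimate (the paper gets the same fact in Lemma \ref{lemma3.80} via the monotonicity of $I(u,\cdot)$ from Corollary \ref{coro3.60}), deduce $EP(u_k,l_k)-c\,p_A(l_k)\to 0$, and then pin down the subsequential limit by the variational comparison $EP(u_k,l_k)\ge EP(u_k,l_A)$, which forces $p_A(l')\ge p_A(l_A)$ and hence $l'=l_A$ by the unique global maximality of $p_A$ at $l_A$. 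Your route is arguably more robust: it needs only continuity of $P$ at $0$ and the shape of $p_A$, not differentiability of $P$ or the consistency relation $p_B(l)=l\,p_A(l)$ that is implicitly used to simplify the paper's FOC; it also handles the $l'=0$ case and the interior case by the same comparison, whereas the paper treats them separately. What the paper's FOC route buys is reusability — the same first-order computation reappears in Appendix \ref{appendixB} to rule out $l_t^*\to 1$ — but for this lemma alone your argument is complete and, if anything, cleaner. Your careful handling of the $l_k\to 0$ boundary via the bound on $l\,|\log l|$ on $(0,M]$ is exactly the point that the paper's monotonicity argument ($I_k\le\max\{-\log u_k,\,I(u_k,M)\}$) sidesteps; both are valid.
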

\begin{proof}
	As $l_k\subset (0,M)$, $\exists$ subsequence $k_n$ such that $l_{k_n}$ converges to $\overline{l}\in [0,M]$. 
	
	We first show that $\overline{l}$ cannot be $0$. Otherwise, $I_{k_n}=I(u_{k_n},l_{k_n})\to 0$. Then $EP(u_{k_n},l_{k_n})\to 0$. On the other hand, fixing choice to be $1$, $EP(u_{k_n},1)=cp_A(1)>0$. So $l_{k_n}\to 0$ cannot be optimal.
	
	Now that $0$ cannot be a cluster point of $\{l_k\}$. We know $\overline{l}\in [m,M]$ with that $0<m<M<+\infty$.	
	As $l_{k_n}$ is optimal, $(u_{k_n},l_{k_n})$ satisfy FOC 
	\begin{eqnarray}
		P'(I(u_{k_n},l_{k_n}))\frac{u_{k_n}(1-u_{k_n})\log l_{k_n}}{u_{k_n}+(1-u_{k_n})l_{k_n}}p_A(l_{k_n})+P(I(u_{k_n},l_{k_n}))[u_{k_n}p_A'(l_{k_n})+(1-u_{k_n})p_B'(l_{k_n})]=0.\notag
	\end{eqnarray}
	Let $k_n\to +\infty$, since $l_{k_n}, \log l_{k_n}$ are bounded, the first term goes to $0$. Using lemma \ref{lemma3.80}, the second term goes to $P(0)p'_A(\overline{l})$, which must equal $0$. Since $P(0)\neq 0$, it must be the case that $p'_A(\overline{l})=0$. By our assumption \ref{assumption1.1}, $\overline{l}=l_A$. Using lemma \ref{lemma3.80} again, $EP(u_k,l_k)\to cp_A(l_A)$ is immediate. 
\end{proof}\\
There is a similar result which can be proved similarly
\begin{lemma}
	\label{lemma3.6}
	Let $u_k\to 0, l_k\in l^*(u_k)$ and $\{l_k\}\subset (m,+\infty)$ with $0<m<+\infty$, then 
	\begin{eqnarray}
		l_k\to l_B \mbox{ and } \lim_{k\to +\infty} EP(u_k,l_k)=cp_B(l_B).
	\end{eqnarray}
	Besides, it is direct to verify that 
	\begin{eqnarray}
		u_k\to 0\Rightarrow EP(u_k,l_B)\to cP_B(l_B).
	\end{eqnarray}
\end{lemma}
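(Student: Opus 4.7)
The plan is to mirror the proof of Lemma \ref{lemma3.5} almost verbatim, with the roles of states $A$ and $B$ swapped. Since $\{l_k\}\subset(m,+\infty)$, I will first pass to a subsequence $l_{k_n}\to\overline{l}\in[m,+\infty]$. The goal is to show every such subsequential limit equals $l_B$ and is finite; boundedness together with uniqueness of the limit point then yields $l_k\to l_B$ for the whole sequence.

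The first task is to rule out $\overline{l}=+\infty$. The comparison project $l_B$ plays the role that $l=1$ did in Lemma \ref{lemma3.5}: a direct calculation gives $I(u,l_B)\to 0$ as $u\to 0$ (the quantity $u+(1-u)l_B\to l_B$, so both summands of $I$ vanish), hence $EP(u_k,l_B)\to cp_B(l_B)>0$. By optimality, $EP(u_{k_n},l_{k_n})\geq EP(u_{k_n},l_B)$ cannot tend to $0$. On the other hand, along a subsequence with $u_{k_n}\to 0$ and $l_{k_n}\to+\infty$, writing $q_n=u_{k_n}+(1-u_{k_n})l_{k_n}\to+\infty$, I verify that the first summand of $I$ obeys $\frac{u_{k_n}}{q_n}\log q_n\leq\frac{u_{k_n}\log q_n}{(1-u_{k_n})l_{k_n}}\to 0$, while $l_{k_n}/q_n=1/(u_{k_n}/l_{k_n}+1-u_{k_n})\to 1$ makes the second summand vanish too. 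Hence $I(u_{k_n},l_{k_n})\to 0$, and combined with Assumption \ref{assumption1.1} giving $u_{k_n}p_A(l_{k_n})+(1-u_{k_n})p_B(l_{k_n})\to 0$, I would obtain $EP(u_{k_n},l_{k_n})\to c\cdot 0=0$, a contradiction.

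For $\overline{l}\in[m,+\infty)$, the first-order condition at $(u_{k_n},l_{k_n})$,
\[
P'(I(u_{k_n},l_{k_n}))\frac{u_{k_n}(1-u_{k_n})\log l_{k_n}}{u_{k_n}+(1-u_{k_n})l_{k_n}}p_A(l_{k_n})+P(I(u_{k_n},l_{k_n}))[u_{k_n}p_A'(l_{k_n})+(1-u_{k_n})p_B'(l_{k_n})]=0,
\]
passes to the limit exactly as in Lemma \ref{lemma3.5}: since $I(u_{k_n},l_{k_n})\to 0$, $P$ and $P'$ converge to finite values; the first summand vanishes because $u_{k_n}(1-u_{k_n})/q_n\to 0$ while the remaining factors stay bounded, and the second summand tends to $cp_B'(\overline{l})$. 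With $c\neq 0$ this forces $p_B'(\overline{l})=0$, and Assumption \ref{assumption1.1} yields $\overline{l}=l_B$. Continuity then gives $EP(u_k,l_k)\to cp_B(l_B)$, and the ``besides'' statement follows from the same $I(u,l_B)\to 0$ calculation.

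The main obstacle will be the step excluding $\overline{l}=+\infty$. In Lemma \ref{lemma3.5} the hypothesis $l_k\in(0,M)$ delivered boundedness for free, so the analogue needed only $I\to 0$ when $l_k\to 0$ with $u_k\to 1$. Here I must track $I$ when both arguments go to their extremes at possibly very different rates, and the cancellation $l_{k_n}/q_n\to 1$ (rather than the blow-up of $q_n$ alone) is what kills the second summand. That is the only substantively new piece of the argument; the rest reuses the structure already developed.
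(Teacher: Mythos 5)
Your proof is correct and is essentially the argument the paper intends: the paper omits the proof of this lemma, stating only that it "can be proved similarly" to Lemma \ref{lemma3.5}, and your mirror-image argument (subsequence, FOC passage to the limit forcing $p_B'(\overline{l})=0$) is exactly that. The one genuinely asymmetric step you flag — ruling out $\overline{l}=+\infty$ by showing $I(u_{k_n},l_{k_n})\to 0$ when $u_{k_n}\to 0$ and $l_{k_n}\to+\infty$, so that the expected payoff drops below the $cp_B(l_B)$ guaranteed by the safe project $l_B$ — is handled correctly and matches how the paper itself uses this fact in the proof of Lemma \ref{lemma3.270}.
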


\begin{lemma}
	\label{lemma3.7}
	There exists a payoff function $P_1(I)$ such that 
	\begin{eqnarray}
		\lim_{u\to 1} EP_1(u,(1-u)^{-1})>c.
	\end{eqnarray}
\end{lemma}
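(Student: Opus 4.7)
The plan is to compute both factors of $EP_1$ explicitly along the curve $l=(1-u)^{-1}$, and then exploit the richness of Hardy's L-function class to pick a $P_1$ that grows fast enough to overpower the vanishing of the success probability. In fact I aim for $\lim_{u\to 1} EP_1(u,(1-u)^{-1}) = +\infty$, which is stronger than the statement. Substituting $l=(1-u)^{-1}$ into $I(u,l)$ makes the posterior normalizer $u+(1-u)l=u+1$, so
\[
I(u,(1-u)^{-1}) = -\log(u+1) - \frac{\log(1-u)}{u+1},
\]
which diverges to $+\infty$ as $u\to 1$ with asymptotic rate $\tfrac{1}{2}|\log(1-u)|$. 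Using the consistency identity $p_B(l)=l\,p_A(l)$, the success-probability factor collapses to $(u+1)\,p_A((1-u)^{-1})$, which vanishes at the decay rate of $p_A$ at $+\infty$. Reparametrizing by $s=(1-u)^{-1}\to+\infty$ and writing $\tilde I(s)$ for the above information quantity, one has $\tilde I(s)\sim\tfrac{1}{2}\log s$, so $s\sim e^{2\tilde I}$, and the goal reduces to finding $P_1\in\mathbf{P}$ with $P_1(\tilde I(s))\,p_A(s)\to+\infty$.

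Next I would build $P_1$ from the L-function machinery. By assumption $p_A(s)\sim L_1(s)$ for some L-function $L_1$. The composition $W(I):=1/L_1(e^{2I})$ is again an L-function of $I$, obtained by substituting $x=e^{2I}$ everywhere in the recursive construction of $L_1$ and taking the reciprocal. Because the L-function class contains members of arbitrarily fast growth (iterated exponentials dominate any prescribed L-function, as discussed in Appendix \ref{appendix1}), I can pick an L-function $R(I)$ with $R/W\to+\infty$; a simple concrete choice is $R(I)=W(I)\,e^I$, for which $R(\tilde I(s))\,L_1(s)\sim \sqrt{s}\to+\infty$. I would then define $P_1$ to coincide with $R$ on $[I_0,+\infty)$ for some large $I_0$, and extend it to $[0,I_0]$ by a strictly increasing, $C^1$ bridge starting at $P_1(0)=c$ and matching $R$ in value and derivative at $I_0$. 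Since membership in $\mathbf{P}$ depends only on the tail, this spliced $P_1$ is a valid payoff function, and by construction $P_1(\tilde I(s))\,p_A(s)\sim R(\tilde I(s))\,L_1(s)\to+\infty$, which gives the lemma.

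The main obstacle is the L-function bookkeeping in the middle step: one must confirm that compositions like $L_1(e^{2I})$ really stay in the L-class, and that the L-class really does contain functions dominating any prescribed L-function. Both facts follow directly from Hardy's recursive definition (L-functions are closed under the algebraic and $\log/\exp$ operations used to manufacture $e^{2I}$, and the chain $\log_n(I)\prec I\prec e_n(I)$ recalled in the Appendix supplies arbitrarily fast-growing members). The remaining ingredients — the explicit KL computation, the reparametrization $s=(1-u)^{-1}$, and the monotone $C^1$-splicing on the compact interval $[0,I_0]$ — are mechanical once the L-function facts are in hand.
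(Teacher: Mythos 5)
Your overall strategy is the same as the paper's: evaluate $I$ and the success probability along $l=(1-u)^{-1}$ (both your formulas are correct, including the collapse of the success factor to $(u+1)p_A((1-u)^{-1})$ via $p_B(l)=l\,p_A(l)$), reduce the problem to making $P_1(\tilde I(s))\,p_A(s)$ large, and build $P_1$ as essentially the reciprocal of $p_A$ composed with an exponential. The paper's construction is $P_1(x)=d\,[p_A(4e^{2x})]^{-1}$.

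There is, however, a genuine quantitative gap in your concrete choice, and it traces back to the step ``$\tilde I(s)\sim\tfrac12\log s$, so $s\sim e^{2\tilde I}$.'' The first asymptotic is true in the ratio sense, but it does \emph{not} give $s\sim e^{2\tilde I}$: since $\tilde I(s)=\tfrac12\log s-\log 2+o(1)$, one has $e^{2\tilde I(s)}\sim s/4$, not $s$. Hence with $W(I)=1/L_1(e^{2I})$ and $R(I)=W(I)e^I$ you get
\begin{eqnarray}
R(\tilde I(s))\,p_A(s)\;\approx\;\frac{\sqrt{s}}{2}\cdot\frac{L_1(s)}{L_1(s/4)},\notag
\end{eqnarray}
and the ratio $L_1(s)/L_1(s/4)$ is a harmless constant only when $L_1$ decays polynomially. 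The assumptions allow, e.g., $p_A(l)\sim e^{-l}$, for which $L_1(s)/L_1(s/4)=e^{-3s/4}$ kills the $\sqrt{s}$ factor and the product tends to $0$, so your $P_1$ does not work. Your abstract fallback---pick $R$ with $R/W\to+\infty$---does not repair this, because the function you actually need to dominate is $1/L_1(4e^{2I})$ (up to the bounded factor $u+1$), and $[1/L_1(4e^{2I})]\big/[1/L_1(e^{2I})]$ can itself diverge faster than whatever margin $R/W$ you chose. The fix is exactly the factor the paper inserts: take $P_1(I)$ proportional to $[p_A(4e^{2I})]^{-1}$ (or dominate $1/L_1(Ce^{2I})$ for a constant $C\geq 4$, using that $I(u,(1-u)^{-1})\geq\tfrac12\log s-\log 2$ and eventual monotonicity of $p_A$). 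Your remaining steps---the $C^1$ monotone splice near $0$ to enforce $P_1(0)=c$ and the check that the tail is asymptotic to an L-function---are fine, and indeed slightly more careful about the base-salary normalization than the paper's own choice of the constant $d$.
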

\begin{proof}
	If $u\to 1$, we want the expected payoff associated with $(u,(1-u)^{-1})$  
	\begin{eqnarray}
		P_1\Big(-\frac{\log(1-u)}{1+u}-\log(u+1)\Big)p_A\Big(\frac{1}{1-u}\Big)(u+1).
	\end{eqnarray}
	to be eventually larger than the base salary $c$. 
	This holds as long as 
	\begin{eqnarray}
		\label{eqn1-32-2023-10-20}
		\lim_{x\to +\infty}p_A(x)P_1\Big(\frac{\log x}{2}-\log 2\Big)> c.
	\end{eqnarray}
	This can be done by choosing $P_1(x)=d[p_A(4e^{2x})]^{-1}$ with $d=\frac{c}{p_A(4)}>c$. 
	\footnote{As we assume that $p_A(l_A)$ peaks at $l_A$ near $1$ and strictly decreases afterwards, $p_A(4)<1$.} 
	This special choice of $d$ guarantees that the constructed $P_1(I)$ also has base salary $c$.
	
	Furthermore, since $\lim_{x\to +\infty}p_A(x)/L_1(x)=1$ for some L-function, 
	$$\lim_{x\to +\infty} P_1(x)/d[L(4e^{2x})]^{-1}=1.$$ And $d[L(4e^{2x})]^{-1}$ is also a L-function. Lastly, as $p_A(x)$ strictly decreases for $x>4$, $P_1(x)=d[p_A(4e^{2x})]$ strictly increases for $x\geq 0$. Thus, the constructed $P_1(x)$ satisfies all the assumptions of a payoff function.
\end{proof}\\
Obviously, $P_1(I)\in \overline{\mathbf{P}}_r$.
Now we turn to prove the non-emptiness of $\underline{\mathbf{P}}$.
\begin{proposition}
	Let $P(I)$ satisfies that $\lim_{I\to +\infty} P(I)<+\infty$. Then $P(I)$ is in  $\underline{\mathbf{P}}$. 
\end{proposition}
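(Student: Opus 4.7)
The plan is to turn the boundedness of $P(I)$ into a uniform vanishing bound on the expected payoff for extreme projects, and then compete against a fixed, non-extreme fallback choice whose expected payoff is uniformly bounded away from zero.

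First I would observe that because $P$ is strictly increasing with base salary $c = P(0) > 0$ and $\lim_{I\to+\infty} P(I) = K < +\infty$, we have $c \leq P(I) \leq K$ for all $I \geq 0$. This gives the uniform upper bound
\begin{equation*}
EP(u,l) \;=\; P(I(u,l))\,[u p_A(l) + (1-u) p_B(l)] \;\leq\; K\,[p_A(l) + p_B(l)],
\end{equation*}
which does not depend on $u$. By Assumption~\ref{assumption1.1}(2), $p_A(l) + p_B(l) \to 0$ as $l \to 0^+$ or $l \to +\infty$, so for any $\eta > 0$ there exist $0 < \underline{L}_\eta \leq \overline{L}_\eta < +\infty$ such that $EP(u,l) < \eta$ whenever $l \notin [\underline{L}_\eta, \overline{L}_\eta]$, uniformly in $u$.

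Next I would produce a uniform positive lower bound on the value function. For $u \geq 1/2$, plugging in the fallback choice $l_A$ gives $EP(u,l_A) \geq c\,u\,p_A(l_A) \geq \tfrac{c}{2} p_A(l_A)$; symmetrically, for $u \leq 1/2$ the choice $l_B$ gives $EP(u,l_B) \geq \tfrac{c}{2} p_B(l_B)$. Setting
\begin{equation*}
\underline{EP} \;=\; \tfrac{c}{2}\min\{p_A(l_A),\, p_B(l_B)\} \;>\; 0,
\end{equation*}
we obtain $\max_{l} EP(u,l) \geq \underline{EP}$ for every $u \in (0,1)$.

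Finally I would combine the two bounds: choose $\eta = \underline{EP}/2$ and set $\underline{L} = \underline{L}_\eta$, $\overline{L} = \overline{L}_\eta$. Any $l \notin [\underline{L}, \overline{L}]$ yields $EP(u,l) < \underline{EP}/2 < \max_l EP(u,l)$, so such $l$ cannot be optimal at any $u \in (0,1)$. Hence $L^* \subseteq [\underline{L}, \overline{L}]$, i.e.\ $L^*$ is constricted, which is exactly $P \in \underline{\mathbf{P}}$. The only mildly delicate step is the uniform lower bound across all $u$: one must check that splitting at $u = 1/2$ (and using $l_A$ or $l_B$ accordingly) indeed gives a bound that does not collapse as $u \to 0$ or $u \to 1$; this works because the fallback choice is insensitive to $u$ and the success probabilities $p_A(l_A)$, $p_B(l_B)$ are strictly positive constants. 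No use is made of $\varepsilon$ or of p-hacking here — the argument is purely about the individual optimization problem.
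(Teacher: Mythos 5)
Your proof is correct, but it takes a genuinely different and more elementary route than the paper's. The paper argues by contradiction: if $L^*$ were not constricted from above, Lemma \ref{lemma3.270} would force a sequence $u_k\to 1$ with optimal $l_k\to+\infty$, whence $EP(u_k,l_k)\to 0$ because $P$ is bounded while the success probability vanishes, contradicting the fallback value $EP(u_k,l_A)\to cp_A(l_A)>0$ from Lemma \ref{lemma3.5}; the lower constriction is handled symmetrically. You instead prove the claim directly via two uniform-in-$u$ bounds: the boundedness $c\le P(I)\le K$ gives $EP(u,l)\le K[p_A(l)+p_B(l)]$, which vanishes outside a fixed compact interval of $l$ by Assumption \ref{assumption1.1}(2), while the split at $u=1/2$ with fallbacks $l_A$ and $l_B$ gives the uniform floor $\tfrac{c}{2}\min\{p_A(l_A),p_B(l_B)\}$ on the value function. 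Your version buys self-containedness (no appeal to Lemmas \ref{lemma3.270} and \ref{lemma3.5}, no FOC or limit-point analysis) and yields an explicit compact interval containing $L^*$, handling both tails at once; the paper's version is shorter on the page only because it leans on lemmas proved elsewhere, and its sequence-based structure is what generalizes to the comparisons $Q\prec P$ in Proposition \ref{prop3.11}. The one step you flag as delicate — that the floor does not collapse as $u\to 0$ or $u\to 1$ — is indeed fine, since $P(I)\ge P(0)=c$ for all $I\ge 0$ and the retained state's weight is at least $1/2$ on the relevant half of the belief range.
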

\begin{proof}
	Assume otherwise, under $P(I)$, $L^*$ is not constricted from the above. By lemma \ref{lemma3.270}, there exists $u_k\to 1$ and $l_k\in l^*_P(u_k)$ satisfying that $l_k\to +\infty$. Since $\lim_{I\to +\infty} P(I)<+\infty$, $P(I_k)$ is bounded away from $+\infty$ even if $I_k\to +\infty$. Thus, $EQ(u_k,l_k)\to 0$. By lemma \ref{lemma3.5}, $EQ(u_k,l_A)\to cp_A(l_A)>0$. This contradicts the optimality of $l_k$.
	
	In a similar way, we could prove that $L^*$ under $P(I)$ must be constricted from below.
\end{proof}

Now we prove the second part of proposition \ref{prop4.3}.
\begin{proposition}
	\label{prop3.10}
	$P\in \overline{\mathbf{P}}_r$ and $Q\succ P$ implies that $Q\in \overline{\mathbf{P}}_r$
\end{proposition}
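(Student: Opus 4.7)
The plan is by contradiction: if $Q\notin \overline{\mathbf{P}}_r$, I will exhibit a strategy under $Q$ that strictly beats the supposedly optimal one by \emph{borrowing} from the optimal policy for $P$, exploiting the fact that $P$-optima already drive information to infinity and that $Q$ rewards information strictly more generously than $P$ in the tail. The contradiction then comes from reading off the precise asymptotic optimal value under $Q$ for bounded policies via Lemma \ref{lemma3.5}.

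The first step is to assume $\min l^*_Q(u)\not\to+\infty$ as $u\to 1$, extract $u_k\to 1$ and $l_k\in l^*_Q(u_k)$ with $\{l_k\}$ bounded, and apply Lemma \ref{lemma3.5} to conclude $l_k\to l_A$ and $EQ(u_k,l_k)\to cp_A(l_A)$. The second step is to pick any selection $\hat l_k\in l^*_P(u_k)$; since $P\in \overline{\mathbf{P}}_r$ we have $\hat l_k\to +\infty$, and optimality under $P$ together with the second half of Lemma \ref{lemma3.5} give $EP(u_k,\hat l_k)\geq EP(u_k,l_A)\to cp_A(l_A)$, so $\liminf_k EP(u_k,\hat l_k)\geq cp_A(l_A)$. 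The third step is to set $I_k:=I(u_k,\hat l_k)$ and argue that $I_k\to+\infty$: since $p_A(\hat l_k),p_B(\hat l_k)\to 0$, the factorization $EP(u_k,\hat l_k)=P(I_k)[u_k p_A(\hat l_k)+(1-u_k)p_B(\hat l_k)]$ forces $P(I_k)\to+\infty$, hence $I_k\to+\infty$ once I know $P$ is unbounded. The final step uses $Q\succ P$: choose any $r\in(1,\lim_{I\to+\infty}Q(I)/P(I))$, obtain $EQ(u_k,\hat l_k)\geq r\, EP(u_k,\hat l_k)$ eventually, take liminfs to get $\liminf_k EQ(u_k,\hat l_k)\geq r\cdot cp_A(l_A)>cp_A(l_A)$, and compare with $EQ(u_k,l_k)\geq EQ(u_k,\hat l_k)$ together with $EQ(u_k,l_k)\to cp_A(l_A)$, which is the desired contradiction.

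The one delicate step is justifying that $P$ is unbounded, on which the implication $P(I_k)\to+\infty\Rightarrow I_k\to+\infty$ relies. Here I would invoke the proposition immediately preceding, which says $\lim_{I\to+\infty}P(I)<+\infty$ forces $P\in \underline{\mathbf{P}}$, i.e., $L^*_P$ constricted; this is incompatible with $P\in \overline{\mathbf{P}}_r$, so $P$ must be unbounded and, being strictly increasing and continuous, it pushes $I_k$ to infinity as required. Everything else reduces cleanly to Lemma \ref{lemma3.5} paired with the total order on $\mathbf{P}$ supplied by L-functions, so no further technicalities are expected.
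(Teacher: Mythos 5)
Your proposal is correct and follows essentially the same route as the paper's proof: contradiction via a bounded selection $l_k\in l^*_Q(u_k)$, Lemma \ref{lemma3.5} pinning the limit value at $cp_A(l_A)$, switching to the $P$-optimal projects, showing the associated information diverges, and then using $Q\succ P$ to manufacture a strictly larger value $a\cdot cp_A(l_A)$ with $a>1$. The only (harmless) difference is your detour through the preceding proposition to establish that $P$ is unbounded — this is already implied by $P(I_k)\to+\infty$, which you derive directly from the vanishing success probability, and the paper instead argues the contrapositive on a subsequence with $I_{k_n}\to\overline{I}<+\infty$.
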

\begin{proof}
Assume otherwise, $Q\notin \overline{\mathbf{P}}_r$. Then there exists $u_k\to 1, l_k\in l^*_Q(u_k)$ and $\{l_k\}\subset (0,M)$ uniformly bounded from above.
		
		By lemma \ref{lemma3.5}, 
		\begin{eqnarray}
			\label{eqn3.17}
			EQ(u_k,l_k)\to cp_A(l_A).
		\end{eqnarray}
		
		On the other hand, as $EP(u_k,l^*_P(u_k))\geq EP(u_k,l_A)$
		\begin{eqnarray}
			\liminf_{k\to +\infty} EP(u_k,l^*_P(u_k))\geq cp_A(l_A).
		\end{eqnarray}
		Furthermore, $I_k=I(u_k,l^*_P(u_k))$ must go to $+\infty$. Otherwise, there exists a subsequence $I_{k_n}$ goes to a finite $\overline{I}$. But this means that the vanishing success probability cannot be compensated. Hence $EP(u_{k_n},l^*_P(u_{k_n}))\to 0$. 
		
		Now, by switching from $l_k$ to $l^*_P(u_k)$, we could guarantee a higher payoff
		\begin{eqnarray}
			&&\lim_{k\to +\infty} EQ(u_k,l_k)\geq \liminf_{k\to +\infty} EQ(u_k,l^*_P(u_k))\notag\\
			&=&\liminf_{k\to +\infty} Q(I(u_k,l^*_P(u_k)))[u_kp_A(l^*_P(u_k))+(1-u_k)p_B(l^*_P(u_k))]\notag\\
			&\geq& a \liminf_{k\to +\infty} P(I(u_k,l^*_P(u_k)))[u_kp_A(l^*_P(u_k))+(1-u_k)p_B(l^*_P(u_k))]\notag\\
			&\geq& acp_A(l_A) \mbox{ with } a>1.
		\end{eqnarray}
		This obviously contradicts limit in \ref{eqn3.17}.
\end{proof}

Now we prove the second part of proposition \ref{prop4.4}
\begin{proposition}
	\label{prop3.11}
	$Q\in \underline{\mathbf{P}}$ and $P\prec Q$ implies that $P\in \underline{\mathbf{P}}$
\end{proposition}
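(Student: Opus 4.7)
\textbf{Proof plan for Proposition \ref{prop3.11}.} I would mirror the structure of the proof of Proposition \ref{prop3.10}, arguing by contradiction and exploiting the asymptotic comparison $P\prec Q$ in the opposite direction. Suppose $P\notin \underline{\mathbf{P}}$, so $L^*_P$ is not constricted. Then (using the correspondence property that $l^*_P$ is u.h.c.\ and compact-valued on $(0,1)$, so any unboundedness in $L^*_P$ must be attained in the limit $u\to 0$ or $u\to 1$) we can extract a sequence $u_k$ and $l_k\in l^*_P(u_k)$ escaping every compact subset of $(0,+\infty)$. The relevant case is $u_k\to 1$ with $l_k\to +\infty$ (the symmetric case $u_k\to 0,\, l_k\to 0$ is handled analogously using Lemma \ref{lemma3.6}); the cases $u_k\to 1,\,l_k\to 0$ and $u_k\to 0,\, l_k\to +\infty$ are ruled out because there $EP(u_k,l_k)\to 0$ while $EP(u_k,l_A)\to cp_A(l_A)>0$, contradicting optimality.

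Next I would show $I(u_k,l_k)\to +\infty$. Optimality of $l_k$ under $P$ together with $EP(u_k,l_A)\to cp_A(l_A)$ (the base-salary limit from Lemma \ref{lemma3.5}) yields
\begin{equation}
\liminf_{k\to+\infty} EP(u_k,l_k)\geq cp_A(l_A)>0.
\end{equation}
Since $l_k\to+\infty$ forces $p_A(l_k),p_B(l_k)\to 0$, the bracket $u_kp_A(l_k)+(1-u_k)p_B(l_k)\to 0$, so the factor $P(I(u_k,l_k))$ must diverge, and hence $I(u_k,l_k)\to+\infty$. This is the one nontrivial step of the argument, and where the asymptotic (rather than pointwise) nature of the order $\prec$ is used.

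With $I(u_k,l_k)\to+\infty$ in hand, the relation $P\prec Q$ gives $\lim P(I(u_k,l_k))/Q(I(u_k,l_k)) = a <1$, so there exists $b>1$ with $Q(I(u_k,l_k))\geq b\, P(I(u_k,l_k))$ for all large $k$. Therefore
\begin{equation}
\liminf_{k\to+\infty} EQ(u_k,l_k)\geq b\cdot cp_A(l_A)>cp_A(l_A).
\end{equation}
On the other hand, $Q\in \underline{\mathbf{P}}$ means $L^*_Q$ is constricted, so Lemma \ref{lemma3.5} applied to $Q$ yields $EQ(u_k,l^*_Q(u_k))\to cp_A(l_A)$. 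But optimality of $l^*_Q(u_k)$ under $Q$ forces $EQ(u_k,l^*_Q(u_k))\geq EQ(u_k,l_k)$, giving $cp_A(l_A)\geq b\cdot cp_A(l_A)$, a contradiction. The analogous argument with Lemma \ref{lemma3.6} in place of Lemma \ref{lemma3.5} dispatches the $u_k\to 0$, $l_k\to 0$ case, completing the proof.

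The main obstacle is the intermediate step $I(u_k,l_k)\to +\infty$: without it the order $\prec$, which is defined only in the limit $I\to +\infty$, cannot be used to compare $EQ$ and $EP$. Everything else is a direct adaptation of the argument already given for Proposition \ref{prop3.10}.
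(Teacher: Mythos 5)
Your proposal is correct and follows essentially the same route as the paper's proof: contradiction via an escaping sequence $u_k\to 1$, $l_k\to+\infty$ (the paper packages your preliminary reduction as Lemma \ref{lemma3.270}), the lower bound $\liminf EP(u_k,l_k)\geq cp_A(l_A)$ from optimality, the deduction $I(u_k,l_k)\to+\infty$, the multiplicative comparison $Q\geq bP$ with $b>1$ for large $I$, and the contradiction with $EQ(u_k,l^*_Q(u_k))\to cp_A(l_A)$ from Lemma \ref{lemma3.5} under constrictedness of $L^*_Q$. No gaps; your write-up is if anything slightly more explicit about why $I(u_k,l_k)\to+\infty$ and about ruling out the mixed cases.
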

\begin{proof}
Assume otherwise, $L^*_P$ ($L^*$ under $P$) is not constricted from above. By lemma \ref{lemma3.270}, there exists $u_k\to 1$, $l_k\in l_P^*(u_k)$ and that $l_k\to +\infty$. By the optimality of $l_k$, $EP(u_k,l_k)\geq EP(u_k,l_A)$. By lemma \ref{lemma3.5},
		\begin{eqnarray}
			\liminf_{k\to +\infty} EP(u_k,l_k)\geq \lim_{k\to +\infty} EP(u_k,l_A)=cp_A(l_A).
		\end{eqnarray}
		Furthermore, associated $I_k=I(u_k,l_k) \to +\infty$. Otherwise, $\exists$ subsequence $I_{k_n}\to \overline{I}<+\infty$. But this implies that $P(I_{k_n})[u_{k_n}p_A(l_{k_n})+(1-u_{k_n})p_B(l_{k_n})]\to 0$, and contradicts optimality of $l_{k_n}$.
		
		Now
		\begin{eqnarray}
			\label{eqn3.24}
			\liminf_{k\to +\infty} EQ(u_k,l_k)\geq a\liminf_{k\to +\infty} EP(u_k,l_k)\geq acp_A(l_A) \mbox{ with } a>1.
		\end{eqnarray}
		On the other hand, $L^*_Q$ is constricted from the above. By lemma \ref{lemma3.5}, $EQ(u_k,l^*_Q(u_k))\to cp_A(l_A)$. This again contradicts inequalities in \ref{eqn3.24}.
		
Similarly we can prove $L^*_P$ must be constricted from below.
\end{proof}

\subsubsection{Computation details}
\begin{lemma}
	We could rewrite that 
	\begin{eqnarray}
		I(u,l)=\frac{(1-u)l\log l}{u+(1-u)l}-\log(u+(1-u)l)
	\end{eqnarray}
	It is easy to compute that
	\begin{eqnarray}
		\label{eqn3.4}
		\frac{\partial I}{\partial l}=\frac{u(1-u)\log l}{[u+(1-u)l]^2}.
	\end{eqnarray}
\end{lemma}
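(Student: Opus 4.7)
The plan is a direct algebraic verification; no probabilistic or analytic ingredients are required. The lemma collects two identities — a compact rewriting of $I(u,l)$ and a closed-form expression for $\partial I/\partial l$ — that are the workhorses of the first-order conditions used in Lemma \ref{lemma3.5} and Lemma \ref{lemma3.6}.

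For the first identity, I would start from the definition
\begin{eqnarray*}
I(u,l)=\frac{u}{u+(1-u)l}\log\frac{1}{u+(1-u)l}+\frac{(1-u)l}{u+(1-u)l}\log\frac{l}{u+(1-u)l},
\end{eqnarray*}
abbreviate $D=u+(1-u)l$, and split the logarithms as $\log(1/D)=-\log D$ and $\log(l/D)=\log l-\log D$. The two $-\log D$ pieces carry coefficients $u/D$ and $(1-u)l/D$, which sum to $1$, so they collapse to $-\log D$. The surviving $\log l$ term retains coefficient $(1-u)l/D$, and the stated form $(1-u)l\log l/D-\log D$ drops out.

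For the derivative, I would differentiate the rewritten expression term by term, using $\partial D/\partial l=1-u$. Applying the quotient rule to $(1-u)l\log l/D$ gives a numerator $(1-u)(\log l+1)D-(1-u)^2 l\log l$; the $(1-u)l\log l$ contributions cancel, leaving $(1-u)(u\log l+D)$, so the derivative of this first summand equals $u(1-u)\log l/D^{2}+(1-u)/D$. The second summand $-\log D$ contributes $-(1-u)/D$, which exactly cancels the extra $(1-u)/D$, leaving the stated $u(1-u)\log l/[u+(1-u)l]^{2}$.

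The only obstacle is careful bookkeeping of the cancellation; there is no conceptual difficulty, no estimate, and no limit involved. As a sanity check I would note the sign structure matches intuition: $\partial I/\partial l$ vanishes at $l=1$ (the uninformative project), is positive for $l>1$ and negative for $l<1$, consistent with $I(u,\cdot)$ being minimized at $l=1$ and increasing as $l$ moves away from $1$ in either direction (to the extent allowed by the bounds in Lemma \ref{lemma3.1}).
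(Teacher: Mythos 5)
Your computation is correct and is exactly the direct algebraic verification the paper has in mind (the paper omits the proof entirely, labeling it ``easy to compute''): the $-\log D$ coefficients sum to $1$ in the rewriting, and the $(1-u)/D$ terms cancel in the differentiation, just as you show. Your sign check also matches part (1) of Corollary \ref{coro3.60}.
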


Sometimes we would like to work with $\tilde{l}$ which represents the likelihood ratio of state being $A$ over being $B$. Observing $\tilde{l}$ is equivalent to observing $l=\frac{1}{\tilde{l}}$, thus the generated information is 
\begin{eqnarray}
	\label{eqn3.7}
	I(u,\frac{1}{\tilde{l}})=\frac{u\tilde{l}\log \tilde{l}}{u\tilde{l}+(1-u)}-\log(u\tilde{l}+(1-u))
\end{eqnarray}
which we shall denote as $\tilde{I}(u,\tilde{l})$. 
This computation result has some immediate corollary
\begin{corollary}
	\label{coro3.60}
	\begin{enumerate}
		\item Fixing $u$, $I(u,l)$ strictly decreases on $(0,1)$ and strictly increases on $(1,+\infty)$.
		\item $I(u,l)=0 \Rightarrow l=1.$
		\item $I(u,l)\leq \max\{-\log u,-\log (1-u)\}$.
	\end{enumerate} 
\end{corollary}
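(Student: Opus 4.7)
The plan is to read off all three claims from the explicit derivative formula $\frac{\partial I}{\partial l}=\frac{u(1-u)\log l}{[u+(1-u)l]^2}$ supplied in the preceding lemma. Since $u(1-u)>0$ and $[u+(1-u)l]^2>0$ for every $l\in(0,+\infty)$, the sign of $\partial I/\partial l$ coincides with the sign of $\log l$. This immediately yields part~1: $I(u,\cdot)$ is strictly decreasing on $(0,1)$ and strictly increasing on $(1,+\infty)$.

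For part~2, the monotonicity in part~1 identifies $l=1$ as the unique minimizer of $I(u,\cdot)$ on $(0,+\infty)$. Substituting $l=1$ into the explicit formula $I(u,l)=\frac{(1-u)l\log l}{u+(1-u)l}-\log(u+(1-u)l)$ gives $I(u,1)=0-\log 1=0$, so any $l\neq 1$ yields $I(u,l)>0$. In particular, $I(u,l)=0$ forces $l=1$.

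For part~3, part~1 further implies $\sup_{l\in(0,+\infty)} I(u,l)=\max\{\lim_{l\to 0^+} I(u,l),\lim_{l\to+\infty} I(u,l)\}$, with the supremum not attained. I would compute each limit directly from the explicit formula. As $l\to 0^+$, $l\log l\to 0$ makes the first term vanish, while $-\log(u+(1-u)l)\to-\log u$. As $l\to+\infty$, I rewrite the first term as $\tfrac{(1-u)l}{u+(1-u)l}\log l$; its prefactor tends to $1$, and $\log(u+(1-u)l)=\log l+\log(1-u)+o(1)$ cancels the surviving $\log l$, leaving $-\log(1-u)$. Taking the maximum of the two limits gives the bound claimed in part~3.

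The only mildly delicate step is the $l\to+\infty$ limit, where two $\log l$ terms must cancel; rewriting the first term to expose a prefactor converging to $1$ makes the cancellation transparent. Everything else is an immediate consequence of the sign of the derivative formula already established.
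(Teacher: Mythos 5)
Your proposal is correct and follows essentially the same route as the paper: all three parts are read off from the derivative formula $\frac{\partial I}{\partial l}=\frac{u(1-u)\log l}{[u+(1-u)l]^2}$ and the two boundary limits. The only cosmetic differences are that you evaluate the $l\to+\infty$ limit by a direct asymptotic cancellation of the $\log l$ terms where the paper substitutes $\tilde{l}=1/l$ and invokes its formula for $\tilde{I}(u,\tilde{l})$, and for part~2 you argue via strict monotonicity away from the minimum at $l=1$ where the paper uses a Rolle-type argument; both are equally valid.
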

\begin{proof}
	Part (1) obviously follows from \ref{eqn3.4}. 
	
	For part (2), as $I(u,1)=0$, if there exists another $\hat{l}$ such that $I(u,\hat{l})=0$, then there is $\tilde{l}$ between $\hat{l}$ and $1$ such that $\frac{\partial I}{\partial l}(u,\tilde{l})=0$. Such a $\tilde{l}$ doesn't exists according to \ref{eqn3.4}.
	
	Part (3) follows from part (1), as $-\log u=\lim_{l\to 0} I(u,l)$. Using \ref{eqn3.7}, $\lim_{l\to +\infty} I(u,l)=\lim_{\tilde{l}\to 0} \tilde{I}(u,\tilde{l})=-\log(1-u)$.
\end{proof}

\begin{lemma}
	\label{lemma3.80}
	Let $u_k\to 1$ and $\{l_k\}\subset (0,M)$ being uniformly bounded from above, then associated 
		\begin{eqnarray}
			I_k=I(u_k,l_k)\to 0.
		\end{eqnarray}
\end{lemma}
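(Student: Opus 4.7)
The plan is to argue directly from the closed-form expression for $I(u,l)$ given in the excerpt, namely
\begin{eqnarray}
I(u,l)=\frac{(1-u)l\log l}{u+(1-u)l}-\log\bigl(u+(1-u)l\bigr),
\end{eqnarray}
and to show term-by-term that each piece tends to $0$ as $u_k\to 1$ under the hypothesis $\{l_k\}\subset(0,M)$. The key structural observation is that the factor $(1-u_k)$ appears in front of everything that could otherwise be large, so boundedness of $l_k$ (together with the boundary behaviour of $l\mapsto l\log l$) will force the product to vanish.

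First I would handle the denominator $u_k+(1-u_k)l_k$. Since $l_k\le M$, we have $(1-u_k)l_k\le (1-u_k)M\to 0$, so $u_k+(1-u_k)l_k\to 1$. This immediately gives $\log\bigl(u_k+(1-u_k)l_k\bigr)\to 0$, disposing of the second term in $I(u,l)$. It also shows that the denominator of the first term is eventually bounded below by, say, $1/2$, so I only need to show that the numerator $(1-u_k)l_k\log l_k$ tends to $0$.

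For the numerator, I would rely on the fact that on the interval $(0,M]$ the function $\varphi(l)=l\log l$ is continuous and extends continuously to $l=0$ with $\varphi(0^+)=0$, hence is bounded: there exists $C_M<\infty$ with $|l\log l|\le C_M$ for all $l\in(0,M]$. (Concretely one can take $C_M=\max\{1/e,\,M|\log M|\}$, with $1/e$ the minimum of $l\log l$ attained at $l=1/e$, and $M|\log M|$ controlling the value at the right endpoint.) Then
\begin{eqnarray}
\bigl|(1-u_k)l_k\log l_k\bigr|\le (1-u_k)C_M\longrightarrow 0,
\end{eqnarray}
so the first term in $I(u_k,l_k)$ also tends to $0$, and combining with the previous paragraph yields $I_k\to 0$.

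I do not anticipate any serious obstacle. The only subtlety worth flagging is that, since $l_k$ is allowed to lie anywhere in $(0,M)$, one cannot simply assume $l_k$ is bounded away from $0$ and use continuity of $l\log l$ on a compact subinterval; the argument must use that $l\log l$ remains bounded as $l\to 0^+$. Once that is observed, the proof reduces to the two routine limit computations above.
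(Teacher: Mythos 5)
Your proof is correct. It takes a somewhat different route from the paper's: the paper invokes the unimodality of $l\mapsto I(u,l)$ established in Corollary \ref{coro3.60} (decreasing on $(0,1)$, increasing on $(1,+\infty)$) to bound $I_k\leq \max\{-\log u_k,\, I(u_k,M)\}$ and then checks that both endpoint quantities vanish as $u_k\to 1$; you instead estimate the closed-form expression for $I(u,l)$ term by term, using that $u_k+(1-u_k)l_k\to 1$ and that $l\log l$ is uniformly bounded on $(0,M]$. Your argument is self-contained and does not rely on the monotonicity corollary, at the cost of a slightly longer computation; the paper's is shorter given that the corollary is already available. Your flag about $l_k$ possibly accumulating at $0$ is well taken and correctly handled by the boundedness of $l\log l$ near $0^+$, so there is no gap.
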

\begin{proof}
	Using part (1) in corollary \ref{coro3.60}, 
	\begin{eqnarray}
		I_k\leq \max\{-\log u_k, I(u_k,M)\}.
	\end{eqnarray}
	It is direct to verify that $\lim_{u_k\to 1} \max\{-\log u_k, I(u_k,M)\}=0$.
\end{proof}

\begin{lemma}
	\label{lemma3.270}
	Assume that $L^*$ is not constricted from above under payoff function $Q(I)$. Then if $l_k\in l^*_{Q}(u_k)$ and $l_k\to +\infty$, we have $u_k\to 1$.
\end{lemma}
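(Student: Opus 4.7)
The plan is to argue by contradiction. Suppose $u_k\not\to 1$; then compactness of $[0,1]$ produces a subsequence, which I continue to denote $u_k$, converging to some $\bar u\in[0,1)$. I will rule out the two cases $\bar u=0$ and $\bar u\in(0,1)$ separately, which together force $u_k\to 1$.

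The case $\bar u=0$ is handled directly by the symmetric companion result: since $l_k\to+\infty$ the tail of $\{l_k\}$ lies in $(1,+\infty)$, so Lemma \ref{lemma3.6} applies and forces $l_k\to l_B<+\infty$, contradicting $l_k\to+\infty$. For the case $\bar u\in(0,1)$ the idea is to compare $EQ(u_k,l_k)$ with $EQ(u_k,1)$: extreme $l_k$ drives the expected payoff to zero, while the fixed alternative $l=1$ keeps it bounded away from zero along the subsequence, contradicting optimality of $l_k$. Concretely, Corollary \ref{coro3.60} yields the uniform information bound
\[
I(u_k,l_k)\leq \max\{-\log u_k,-\log(1-u_k)\}\to \max\{-\log\bar u,-\log(1-\bar u)\}<+\infty,
\]
so continuity of $Q$ makes $Q(I(u_k,l_k))$ bounded. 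Assumption \ref{assumption1.1} then gives $p_A(l_k),p_B(l_k)\to 0$ as $l_k\to+\infty$, so
\[
EQ(u_k,l_k)=Q(I(u_k,l_k))[u_kp_A(l_k)+(1-u_k)p_B(l_k)]\to 0,
\]
whereas $EQ(u_k,1)=c[u_kp_A(1)+(1-u_k)p_B(1)]\to c[\bar u p_A(1)+(1-\bar u)p_B(1)]>0$. This contradicts $l_k\in l^*_Q(u_k)$ for large $k$.

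The argument is largely bookkeeping. The one step that genuinely uses the model's structure is the uniform bound on $Q(I(u_k,l_k))$, where the finite supremum for information supplied by Lemma \ref{lemma3.1}/Corollary \ref{coro3.60} is essential; without it a very fast-growing $Q$ could in principle offset the vanishing success probabilities and the vanishing-expected-payoff step would fail. That this is the delicate spot is consistent with the theme of Section \ref{constrictedness}, where the interplay between payoff growth and the tails of $p_A,p_B$ is precisely what determines constrictedness.
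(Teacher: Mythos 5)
Your proposal is correct and follows essentially the same route as the paper's proof: pass to a subsequence with limit $\bar u\in[0,1)$, invoke Lemma \ref{lemma3.6} when $\bar u=0$, and when $\bar u\in(0,1)$ use the finite information bound from Corollary \ref{coro3.60} together with the vanishing tails of $p_A,p_B$ to drive $EQ(u_k,l_k)\to 0$, contradicting the positive payoff of the safe choice $l=1$. The only cosmetic difference is that in the $\bar u=0$ case you contradict $l_k\to+\infty$ via the conclusion $l_k\to l_B$ of Lemma \ref{lemma3.6}, whereas the paper contradicts optimality via the payoff comparison with $cp_B(l_B)$; both are valid.
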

\begin{proof}
	If $u_k\to 1$ is not true, then there exists subsequence $u_{k_n}$ such that $u_{k_n}\to \overline{u}\in [0,1)$.
	
	In the case that $\overline{u}=0$, using \ref{eqn3.7}, the associated information $I_{k_n}\to 0$. Then $EQ(u_{k_n},l_{k_n})\to 0$. Lemma \ref{lemma3.6} says that a positive payoff of $cp_B(l_B)$ is guaranteed by choose $l_B$. Thus, $l_{k_n}$ cannot be optimal. Contradiction!
	
	In the case that $\overline{u}\in (0,1)$, by part (3) of corollary \ref{coro3.60}, $I_{k_n}$ is bounded away from $+\infty$. This further implies that $P(I_{k_n})$ is bounded away from $+\infty$. Then $EQ(u_{k_n},l_{k_n})\to 0$. Similarly, by fixing the choices to be $1$, 
	\begin{eqnarray}
		EQ(u_{k_n},1)\to cp_A(1)>0.
	\end{eqnarray}
	Again, $l_{k_n}$ cannot be optimal.
\end{proof}

\bibliographystyle{ecta-fullname}
\bibliography{observlearning.bib}
\end{document}